\newcommand{\mycircle}[1]{\textcircled{\raisebox{-0.5pt}{#1}}}
\definecolor{carnationpink}{rgb}{1.0, 0.65, 0.79}
\definecolor{royalblue}{rgb}{0.25, 0.41, 0.88}
\definecolor{auburn}{rgb}{0.43, 0.21, 0.1}
\definecolor{bronze}{rgb}{0.8, 0.5, 0.2}
\definecolor{byzantium}{rgb}{0.44, 0.16, 0.39}
\definecolor{cadetblue}{rgb}{0.37, 0.62, 0.63}
\definecolor{darkcyan}{rgb}{0.0, 0.55, 0.55}
\definecolor{darkpink}{rgb}{0.91, 0.33, 0.5}
\definecolor{darkscarlet}{rgb}{0.34, 0.01, 0.1}
\definecolor{deepcarmine}{rgb}{0.66, 0.13, 0.24}
\definecolor{deepchestnut}{rgb}{0.73, 0.31, 0.28}
\definecolor{desert}{rgb}{0.76, 0.6, 0.42}
\definecolor{fandango}{rgb}{0.71, 0.2, 0.54}
\definecolor{firebrick}{rgb}{0.7, 0.13, 0.13}
\definecolor{harvardcrimson}{rgb}{0.79, 0.0, 0.09}
\definecolor{lapislazuli}{rgb}{0.15, 0.38, 0.61}
\definecolor{lightpastelpurple}{rgb}{0.69, 0.61, 0.85}
\definecolor{darkpastelpurple}{rgb}{0.59, 0.44, 0.84}
\definecolor{darkraspberry}{rgb}{0.53, 0.15, 0.34}
\definecolor{darkviolet}{rgb}{0.58, 0.0, 0.83}
\definecolor{deepcerise}{rgb}{0.85, 0.2, 0.53}
\definecolor{maroon}{rgb}{0.5, 0.0, 0.0}
\definecolor{modebeige}{rgb}{0.59, 0.44, 0.09}
\definecolor{persimmon}{rgb}{0.93, 0.35, 0.0}
\definecolor{britishracinggreen}{rgb}{0.0, 0.26, 0.15}
\definecolor{darkpastelgreen}{rgb}{0.01, 0.75, 0.24}
\definecolor{chestnut}{rgb}{0.8, 0.36, 0.36}
\definecolor{darkcoral}{rgb}{0.8, 0.36, 0.27}
\definecolor{goldenbrown}{rgb}{0.6, 0.4, 0.08}
\newif\ifcolormode
  \newcommand{\mr}[1]{{{\color{persimmon}#1}\largertodo[color=persimmon]{MR1}}}
    \newcommand{\mrno}[1]{{{\color{persimmon}#1}}}
  \newcommand{\mr}[1]{{{#1}}}
    \newcommand{\mrno}[1]{{{#1}}}
  \newcommand{\mrra}[1]{{{\color{goldenbrown}#1}\largertodo[color=goldenbrown]{MR 2.a}}}
  \newcommand{\mrrb}[1]{{{\color{blue}#1}\largertodo[color=blue]{MR 2.b}}}
  \newcommand{\mrrc}[1]{{{\color{harvardcrimson}#1}\largertodo[color=harvardcrimson]{MR 2.c}}}
  \newcommand{\mrrd}[1]{{{\color{bronze}#1}\largertodo[color=bronze]{MR 2.d}}}
  \newcommand{\mrre}[1]{{{\color{modebeige}#1}\largertodo[color=modebeige]{MR 2.e}}}
  \newcommand{\mrra}[1]{{{#1}}}
  \newcommand{\mrrb}[1]{{{#1}}}
  \newcommand{\mrrc}[1]{{{#1}}}
  \newcommand{\mrrd}[1]{{{#1}}}
  \newcommand{\mrre}[1]{{{#1}}}
\definecolor{mediumslateblue}{rgb}{0.48, 0.41, 0.93}
  \newcommand{\mrrra}[1]{{{\color{darkviolet}#1}\largertodo[color=darkviolet]{MR 3.a}}}
  \newcommand{\mrrrb}[1]{{{\color{mediumslateblue}#1}\largertodo[color=mediumslateblue]{MR 3.b}}}
      \newcommand{\mrrrbno}[1]{{{\color{mediumslateblue}#1}}}
  \newcommand{\mrrrc}[1]{{{\color{deepcerise}#1}\largertodo[color=deepcerise]{MR 3.c}}}
  \newcommand{\mrrrd}[1]{{{\color{darkpastelgreen}#1}\largertodo[color=darkpastelgreen]{MR 3.d}}}
    \newcommand{\mrrrdno}[1]{{{\color{darkpastelgreen}#1}}}
  \newcommand{\mrrre}[1]{{{\color{blue}#1}\largertodo[color=blue]{MR 3.e}}}
  \newcommand{\mrrra}[1]{{{#1}}}
  \newcommand{\mrrrb}[1]{{{#1}}}
      \newcommand{\mrrrbno}[1]{{{#1}}}
  \newcommand{\mrrrc}[1]{{{#1}}}
  \newcommand{\mrrrd}[1]{{{#1}}}
      \newcommand{\mrrrdno}[1]{{{#1}}}
  \newcommand{\mrrre}[1]{{{#1}}}
  \newcommand{\exps}[1]{{{\color{darkpink}#1}\largertodo[color=darkpink]{EXP}}}
    \newcommand{\expsno}[1]{{{\color{darkpink}#1}}}
  \newcommand{\exps}[1]{{{#1}}}
    \newcommand{\expsno}[1]{{{#1}}}
  \newcommand{\other}[1]{{{\color{darkscarlet}#1}}}
  \newcommand{\other}[1]{{{#1}}}
\lstdefinestyle{myStyle}{
    belowcaptionskip=0\baselineskip,
    breaklines=true,
    frame=none,
    tabsize=2,
    numbers=left,
    numbersep=5pt,
    xleftmargin=2.5ex,
    basicstyle=\ttfamily\small,
    % keywordstyle=\bfseries\color{green!40!black},
    % commentstyle=\itshape\color{purple!40!black},
    % identifierstyle=\color{blue},
    % backgroundcolor=\color{gray!10!white},
}
\newcommand{\classNP}{\textsf{NP}}
\newcommand{\classP}{\textsf{P}}
\newcommand{\DP}{\textsc{DP}}
\newcommand{\C}{\mathcal{C}}
\newcommand{\T}{\mathcal{T}}
\newcommand{\out}{\mathrm{out}}
\newcommand{\smj}{\mathrm{smj}}
\newcommand{\Cout}{C_{\out}}
\newcommand{\Cmax}{C_{\max}}
\newcommand{\Csmj}{C_{\smj}}
\newcommand{\Ccap}{C_{\mathrm{cap}}}
\newcommand{\DPconv}{\texttt{DPconv}}
\newcommand{\DPccp}{\texttt{DPccp}}
\newcommand{\DPsub}{\texttt{DPsub}}
\newcommand{\eps}{\varepsilon}
\newcommand{\sqeps}{\sqrt\varepsilon}
\newcommand{\sparagraph}[1]{\vspace{1mm}\noindent {\bf #1}}
  \providecommand\BibTeX{{%
    \normalfont B\kern-0.5em{\scshape i\kern-0.25em b}\kern-0.8em\TeX}}}
\begin{document}

%%
%% The "title" command has an optional parameter,
%% allowing the author to define a "short title" to be used in page headers.
\title{DPconv: Super-Polynomially Faster Join Ordering}

%%
%% The "author" command and its associated commands are used to define
%% the authors and their affiliations.
%% Of note is the shared affiliation of the first two authors, and the
%% "authornote" and "authornotemark" commands
%% used to denote shared contribution to the research.
\author{Mihail Stoian}
\orcid{0000-0002-8843-3374}
\affiliation{%
  \institution{UTN}
  \streetaddress{Ulmenstraße 52i}
  \city{Nuremberg}
  \country{Germany}
  \postcode{90443}
}
\email{mihail.stoian@utn.de}

\author{Andreas Kipf}
\orcid{0000-0003-3463-0564}
\affiliation{%
  \institution{UTN}
  \streetaddress{Ulmenstraße 52i}
  \city{Nuremberg}
  \country{Germany}
  \postcode{90443}
}
\email{andreas.kipf@utn.de}

%%
%% By default, the full list of authors will be used in the page
%% headers. Often, this list is too long, and will overlap
%% other information printed in the page headers. This command allows
%% the author to define a more concise list
%% of authors' names for this purpose.
\renewcommand{\shortauthors}{Stoian, et al.}

%%
%% The abstract is a short summary of the work to be presented in the
%% article.

\begin{abstract}
    We revisit the join ordering problem in query optimization. The standard exact algorithm, \texttt{DPccp}, has a worst-case running time of $O(3^n)$. This is prohibitively expensive for large queries, which are not that uncommon anymore. We develop a new algorithmic framework based on subset convolution. $\texttt{DPconv}$ achieves a super-polynomial speedup over \texttt{DPccp}, breaking the $O(3^n)$ time-barrier for the first time. We show that the instantiation of our framework for the $\Cmax$ cost function is up to 30x faster than \texttt{DPccp} for large clique queries.
\end{abstract}

%%
%% The code below is generated by the tool at http://dl.acm.org/ccs.cfm.
%% Please copy and paste the code instead of the example below.
%%
\begin{CCSXML}
<ccs2012>
   <concept>
       <concept_id>10002951.10002952.10003190.10003192.10003210</concept_id>
       <concept_desc>Information systems~Query optimization</concept_desc>
       <concept_significance>500</concept_significance>
       </concept>
   <concept>
 </ccs2012>
\end{CCSXML}

\ccsdesc[500]{Information systems~Query optimization}

%%
%% Keywords. The author(s) should pick words that accurately describe
%% the work being presented. Separate the keywords with commas.
\keywords{join ordering, dynamic programming, fast subset convolution,
\\exponential-time approximation algorithm}

%% A "teaser" image appears between the author and affiliation
%% information and the body of the document, and typically spans the
%% page.

% \received{20 February 2007}
% \received[revised]{12 March 2009}
% \received[accepted]{5 June 2009}

%%
%% This command processes the author and affiliation and title
%% information and builds the first part of the formatted document.
\maketitle

\section*{Source Code}
\texttt{https://github.com/utndatasystems/DPconv}

\section{Introduction}\label{sec:introduction}

The query optimizer is the heart of any relational database system. One of the fundamental tasks of the query optimizer is join ordering. The problem is to reorder the joins, so that the query execution time is minimized. To this end, one introduces a cost model that acts as proxy for the actual execution time. Since the costs are directly reflected in the query execution time, optimal or near-optimal join orders are indispensable for the overall performance. However, the problem is inherently \classNP-hard~\cite{ik}. This means that, \other{unless $\classP = \classNP$}, one has to resort to the exponential (exact) algorithm for small queries and to greedy strategies otherwise.

\sparagraph{Motivation \& Research Question.} In a seminal work, Selinger introduces the first dynamic program to (exactly) optimize the ordering problem~\cite{selinger}. The key observation is that the optimal solution $S^*$ for a set of relations~$P$, called the problem, satisfies Bellman's optimality principle~\cite{bellman1957dynamic}, namely that $S^*$ is computed from two disjoint subproblems $P_1$ and $P_2$, with optimal solutions $S_1^*$ and $S_2^*$, respectively. The naive algorithm, \other{$\texttt{DPsize}$}, runs in $O(4^n)$-time, which can be reduced to $O(3^n)$ by a careful traversal of the subsets of a given set, \other{algorithm known as \texttt{DPsub}~\cite{vance_maier, vance_phdthesis}}.

\begin{figure}
    \centering
    \includegraphics{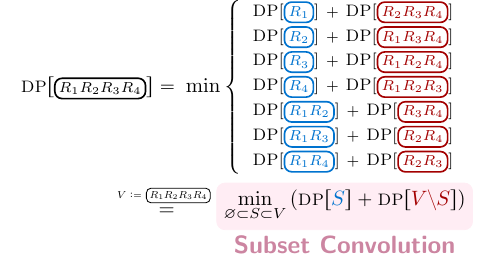}
    \caption{\mrno{How join ordering dynamic programming algorithms, e.g., \texttt{DPsub}, are implicitly using subset convolution. However, they are computing it \emph{naively}. \texttt{DPconv} instead uses a highly-tuned implementation of \emph{fast subset convolution}~\cite{fsc}.}}
    \label{fig:join_ordering_fsc}
\end{figure}

Later, Moerkotte and Neumann~\cite{dpccp} showed that one can obtain an improved algorithm if one disallows cross-products, namely by considering the connectivity structure of the underlying query graph (the algorithm was later extended to hypergraphs~\cite{dphyp}). Their algorithm, \texttt{DPccp}, achieves the lower-bound on the number of connected complement pairs which any dynamic program needs to traverse, as shown by Ono and Lohman~\cite{lohman_cross_products}. Recently, Haffner and \other{Dittrich}~\cite{jo_as_sp} proved that join ordering reduces to computing shortest paths in an exponential-size graph in which the vertices are relation subsets. Their reduction enables the use of well-known speedups via heuristic search, as known from the $A^*$-algorithm. However, while their average-case running time beats that of \texttt{DPccp}, the worst-case running time still remains $O(3^n)$. The $O(3^n)$-time bottleneck leads us to our main question:
\begin{quote}
\centering
\itshape
    Is there a way to break the seemingly unyielding $O(3^n)$-time barrier?
\end{quote}

\mr{
Surprisingly, there is. To this end, consider Fig.~\ref{fig:join_ordering_fsc}, in which we show how the standard join ordering dynamic programming algorithms \texttt{DPsub}~\cite{vance_maier, vance_phdthesis} and \texttt{DPccp}~\cite{dpccp} optimize the full set of relations $V = \{R_1, R_2, R_3, R_4\}$. Simply put, the algorithm iterates over all possible ways to split the original set $V$ into two subsets. \emph{This is exactly a subset convolution.} However, all current join ordering algorithms, \texttt{DPsize}, \texttt{DPsub}, and \texttt{DPccp}, perform it naively, i.e., the expression is evaluated as is. Fortunately for our community, research in algorithm design has led to a \emph{fast} subset convolution~\cite{fsc}. Intuitively, fast subset convolution no longer naively enumerates subsets, but instead uses an FFT-inspired strategy that avoids redundant computational steps of the naive evaluation. To this end, we develop a new exact algorithmic framework based on fast subset convolution that has super-polynomial speedup over $\texttt{DPccp}$~/~$\texttt{DPsub}$. This breaks the long-standing $O(3^n)$ time-barrier for the first time.
}

We instantiate the framework for two well-studied cost functions, $\Cout$ and $\Cmax$, which guarantee time- and space-optimality of query execution, respectively. Namely, the latter minimizes the sum of the intermediate join sizes, while the former minimizes the largest intermediate one.
This results in an $O(2^n n^2 W n \log W n)$-time algorithm for $\Cout$, which is $\widetilde O(2^n)$ when the largest join cardinality $W$ is polynomial in $n$,\footnote{The notation $\widetilde O$ hides poly-logarithmic factors; in this particular case, $n^{O(1)}$.} and an $O(2^n n^3)$-time algorithm for $\Cmax$; the latter running time is independent of $W$.
At a practical level, we show that the instantiation for $\Cmax$ is up to 30x faster than the classic algorithm for clique queries of 17 or more relations.

We further reduce the optimization time for $\Cout$ to $\widetilde O(2^{3n/2} / \sqeps)$-time using an $(1 + \eps)$-approximation algorithm. Unlike our exact algorithm, the running time of this algorithm is independent of $W$.

In addition, we devise a new cost function which combines the benefits of $\Cout$ and $\Cmax$ and provide an implementation which first computes the optimal $\Cmax$ value and then runs a pruned $\Cout$ optimization. We show that this optimization is faster than that of the ``vanilla'' $\Cout$ when using our new framework.

\sparagraph{Contribution.} We summarize our contributions below:

\begin{enumerate}
    \item We introduce a new exact algorithmic framework based on subset convolution which breaks the long-standing time-barrier of $O(3^n)$ for the first time.
    \item We provide a practical instantiation of the framework for $\Cmax$, achieving an $O(2^n n^3)$-time algorithm.
    \item We introduce an $(1 + \varepsilon)$-approximation algorithm for the join ordering problem under $\Cout$ in $\widetilde O(2^{3n/2} / \sqrt{\varepsilon})$-time.
    \item We initiate the joint study of $\Cout$ and $\Cmax$: Minimize the sum of the intermediate join sizes so that the largest one is equal to the optimal $\Cmax$ value.
\end{enumerate}

% Running times.
\mrra{
\sparagraph{Running Times.} Let us first relate the running times for $\Cout$ and $\Cmax$, which seem quite disparate at first glance. They both rely on our highly-tuned implementation of fast subset convolution for dynamic programming (Sec.~\ref{sec:layer_dp}), which runs in $O(2^n n^2)$-time. Thus, we can observe a common $O(2^n n^2)$-time factor in both running times. The difference lies in the implementation of the individual cost functions: Optimizing for $\Cout$ introduces an additional $O(W n \log W n)$-time factor resulting from the application of FFT to sequences of length $W n$ (Sec.~\ref{subsec:c_out}), while $\Cmax$ incurs only an $O(n)$-time overhead (Sec.~\ref{sec:simple_dpconv_cmax}).
}

\mrrb{
% Search Space.
\sparagraph{Search Space.} \DPconv{} imposes no restrictions on the shape of the query graph or join tree. Our framework optimizes arbitrary query graphs---both acyclic and cyclic queries---including cliques, which are considered the worst case of join ordering~\cite{simplification}, \emph{and} bushy join trees, as do other join ordering algorithms such as \DPsub{} and \DPccp{}. This includes optimizing for cross-products in the same running time as for arbitrary query graphs. We discuss this aspect as part of Sec.~\ref{sec:background}. Note that our framework also optimizes query hypergraphs, representing non-inner joins~\cite{dphyp} (discussed in Sec.~\ref{subsec:framework_jo_meets_sc}).
}

\mrrc{
% Other Cost Functions.
\sparagraph{Other Cost Functions.} The literature on join ordering also addresses various cost functions beyond $\Cout$ and $\Cmax$, depending on how a join is executed, e.g., by sort-merge join, hash-join, or nested-loop join~\cite{Moerkotte2006BuildingQC}. We demonstrate that our framework can accommodate the cost function associated with the sort-merge join because it satisfies an additive separability property (see Sec.~\ref{subsec:dpconv_other_cfs}). However, the cost functions associated with hash-joins and nested-loop joins do not enjoy this property and thus cannot be mapped to our framework.
}

% Organization.
\sparagraph{Organization.} The rest of the paper is organized as follows:
First, in Sec.~\ref{sec:background}, we formalize the problem of join ordering and that of \other{fast} subset convolution.
Then, in Sec.~\ref{sec:framework}, we introduce \other{\DPconv{}} along with the novel connection between join ordering and subset convolution. We describe the machinery behind fast subset convolution in Sec.~\ref{sec:FSC}, and in Sec.~\ref{sec:layer_dp} we show how to shave a linear factor from the running time of \emph{any} dynamic programming recursion based on subset convolution (including that of \other{\DPconv{}}). Based on this, we provide in Sec.~\ref{sec:simple_dpconv_cmax} a practical algorithm for $\Cmax$. Then, we outline the approximation algorithm in Sec.~\ref{sec:approx}. We propose $\Ccap$ in Sec.~\ref{sec:ccap}, which we start for the first time the joint study of $\Cout$ and $\Cmax$ with. We outline related work in Sec.~\ref{sec:related_work}, provide a discussion in Sec.~\ref{sec:discussion}, and finally conclude in Sec.~\ref{sec:conclusion}.
\section{Background}\label{sec:background}

In this section, we formalize both problems, namely join ordering and subset convolution.

\subsection{Query Graph}

Let $\mathcal{D} = \{R_1, \ldots, R_n\}$ be a database that contains $n$ relations. A select-project-join query $\mathcal{Q}$ is defined as
\begin{equation}
\mathcal{Q} = \Pi_A(\sigma_P(R_1 \times \dotsc \times R_n)),\label{query_eq} 
\end{equation}
where $P$ is the conjunction of predicates that can be both join predicates, i.e., $R_i.a = R_j.b$, and selection predicates, i.e., $R_i.a = const$, and $A$ is the list of attributes required to appear in the output. The operators $\Pi, \sigma$, and $\times$ are the projection, selection and cross-product operators, respectively, as defined in relational algebra~\cite{codd}.

We can model a query as a \emph{query graph} $Q = (V, E)$, where the vertex set $V$ corresponds to the set of relations $\{R_i\}_{i\in[n]}$ of the query and the edge set $E = \{\{R_u, R_v\}\:|\:R_u, R_v \in V\}$ corresponds to the join predicates (called join edges in the sequel). Intuitively, a query can be evaluated by repeatedly joining two relations and replacing one of them with their join. Another prominent way of executing joins by worst-case optimal joins, which are not necessarily \emph{binary} joins anymore~\cite{wcoj}. In this work, we only concentrate on query optimization of binary joins. In this case, the order in which the joins are performed can be represented by a (binary) \emph{join tree}, where the \other{leaf} nodes are the relations and the inner nodes are the corresponding joins.

\subsection{Cost Function}\label{subsec:cost_functions}

\other{To optimize the join order, one introduces a cost function $\C$ which best models the query execution time. The goal} is to minimize the cost function among all possible join trees. Due to the binary structure of \other{a} join tree, the cost function can be represented as a recursive function along a join tree $\T$, as follows:
\begin{equation}
\C(\T) =
\begin{cases}
    0, & \text{if } \T \text{ is a single relation} \\
    c(T) \otimes \C(\T_1) \otimes \C(\T_2), & \text{if } \T = \T_1 \Join \T_2,
\end{cases}\label{eq:general_c}
\end{equation}
where $c$ is the join cardinality function defined on sets of relations, $T$ is the set of relations spanned by the join tree $\T$,\footnote{Having a separate notation for the join tree and the set of relations it spans will prove useful in the following sections.} and $\T_1$ and $\T_2$ are the left and right join subtrees of $\T$, respectively.

Let us instantiate Eq.~\eqref{eq:general_c} for two cost functions, $\Cout$ and $\Cmax$, which guarantee time-optimality and space-optimality of the query execution, respectively:
\begin{align}
\Cout(T) &= c(T) + \Cout(\T_1) + \Cout(\T_2),\\
\Cmax(T) &= \max\{c(T), \Cmax(\T_1), \Cmax(\T_2)\}.
\label{eq:c_instantiation}
\end{align}
We can observe that the ``$\otimes$'' operator has been substituted by ``$+$'' and ``$\max$'', respectively. \mrrc{We discuss the applicability of our framework to other cost functions in the literature in Sec.~\ref{subsec:dpconv_other_cfs}.}

\subsection{Join Ordering and Dynamic Programming}

% In the following, let us assume for simplicity that cross-products are allowed.
By Bellman's optimality principle~\cite{bellman1957dynamic}, the problem of finding the optimal join tree $\T^*$ amounts to finding the optimal \emph{split} of a set of relations $S$ into two disjoint sets $S_1$ and $S_2$, i.e., $S_1 \cap S_2 = \varnothing$ and $S = S_1 \cup S_2$. Consequently, given a cost function $\C$, the problem can be optimized by the following dynamic programming (DP) recursion, which closely follows the definition of $\C$:
\begin{equation}
\textsc{DP}(S) =
\begin{cases}
    0, & \text{if } |S| = 1\\
    c(S) \otimes \displaystyle\min_{\varnothing \subset T \subset S}(\textsc{DP}(T) \otimes \textsc{DP}(S \setminus T)), & \text{otherwise}.
\end{cases}\label{eq:dp}
\end{equation}
Indeed, this is the idea explored by Selinger and the subsequent work~\cite{selinger, vance_maier, vance_phdthesis, dpccp}. In particular, \texttt{DPccp}~\cite{dpccp} optimizes the recursion by considering only sets of relations that induce a connected subgraph; for clique queries, $\DPsub$ and $\DPccp$ are both exactly the recursion above. \mr{As motivated in Fig.~\ref{fig:join_ordering_fsc}, Eq.~\eqref{eq:dp} is a subset convolution. All previous algorithms evaluate it in the naive way, which takes $O(3^n)$-time. \DPconv{} speeds up its computation by employing fast subset convolution~\cite{fsc}.}

\subsection{Subset Convolution}\label{subsec:back_subset_conv}

Subset convolution is one of the important tools in the field of exact algorithms~\cite{fomin_exp_algos, Cygan2015_chapter}. Its fast counterpart, called Fast Subset Convolution (FSC)~\cite{fsc}, represented a breakthrough in the field by reducing the running time from the straightforward $O(3^n)$ to a non-trivial $O(2^n n^2)$.

\mrrre{
\sparagraph{Dynamic Programming Speedups.} The main application of FSC is the speedup of several dynamic programming recursions of well-known \classNP-hard problems, such as the Steiner tree problem~\cite{dreyfus1971steiner} and min-cost $k$-coloring~\cite{Cygan2015_chapter}. While these problems may seem foreign to our research area, there is a striking similarity between the dynamic programming recursion of these problems and that of the join ordering problem. Indeed, they all use an implicit subset convolution. Through our work, join ordering is now becoming part of this family of problems~\cite{protein_network, chromatic_number_fsc, Cygan2015_chapter, dreyfus1971steiner, ponta2008speeding}.}

While our main result mostly uses FSC in a black-box manner, we present the full machinery behind it in Sec.~\ref{sec:FSC}. Note that for an efficient implementation of the dynamic programs, we will revisit the computation of FSC in Sec.~\ref{sec:layer_dp}, and shave a linear factor for generic FSC-based dynamic programs, as well as several constant factors hidden behind the running time's big-O notation.

\mrrra{
\sparagraph{Key Idea.} Let us first gain an intuition about how subset convolution works at a high level. First, note that the \DP-table is by definition a set function: It maps subsets of relations to their corresponding costs. The usual way to refer to a subset structure is by a subset lattice, in our case of order $n$, since we have $n$ relations. This leads to the following setting: Let $f$ and $g$ be two set functions on the subset lattice of order $n$, their subset convolution in the $(+, \cdot)$ ring is defined for \emph{all} $S \subseteq [n] \vcentcolon= \{1, \ldots, n\}$ by
\begin{equation*}
    h(S) = (f \ast g)(S) = \displaystyle\sum_{T\subseteq S} f(T)g(S \setminus T).
\label{eq:subset_conv_sum_product}
\end{equation*}
Let us first make a few observations: First, the above kind of subset convolution, in the $(+, \cdot)$ ring, is not yet what we exactly need in \DPconv{}. In the next paragraphs, we gradually introduce the toolset to support the subset convolution appearing in Eq.~\eqref{eq:dp}. Second, naively evaluating the above equation for all subsets $S$ takes $O(3^n)$-time. This follows from the fact that for each $S$ we have to iterate over all its ${n \choose |S|}$-many subsets $T$.\footnote{Formally, $\sum_{k = 0}^{n} {n \choose k} 2^k = (1+2)^n = 3^n$.}

\sparagraph{Where The Speedup Comes From.} The faster computation in Bj\"orklund et al.~\cite{fsc} has its roots in a simple observation: One can do a calculation similar to the FFT algorithm~\cite{fft}. The reason: FFT was specifically designed to speed up \emph{sequence} convolutions, bringing down the $O(n^2)$-time of the naive algorithm to a (still unbeatable) $O(n \log n)$-time. Its key insight was to (a) map the original sequence into a Fourier space, (b) perform the convolution in that space as a \emph{point-wise} multiplication -- which takes linear time instead -- and (c) bring the result from the Fourier space back into the original one. The authors take a similar path, resulting in a running time of $O(2^n n^2)$. The perhaps only difference to the original FFT algorithm is \emph{how} the subset functions are mapped to a similar Fourier space where the convolution can be transformed into a point-wise multiplication.
}

\mrrrc{
\sparagraph{The Real Deal: Semi-Rings.} Dynamic programs defined on sets often require the computation to be worked out in \emph{semi-rings}. This is also the case in join ordering for $\Cout$ and $\Cmax$: The well-known $\Cout$ works in the $(\min, +)$ semi-ring, while $\Cmax$ works instead in the $(\min, \max)$ semi-ring. The $(\min, +)$ subset convolution of two set functions $f$ and $g$, $h = f \circ g$, is defined for all $S \subseteq [n]$ as
\begin{equation}
    h(S) = (f \circ g)(S) = \displaystyle\min_{T\subseteq S}\: (f(T) + g(S \setminus T)).
\label{eq:subset_conv_min_plus}
\end{equation}
Unlike the previous kind of subset convolution, computing in the $(\min, +)$ semi-ring results in a different time-complexity landscape. Surprisingly enough, in the general setting, the naive $O(3^n)$-time algorithm which we saw before has the best running time so far. However, in the case where the values of the set functions are bounded integers, one can leverage the previous fast subset convolution for the $(+, \cdot)$ ring. We will come to this in the next section.

\subsection{Rings \& Semi-Rings}\label{subsec:ring_semiring}

We have mentioned rings and semi-rings several times so far. We now want to introduce them in the context of dynamic programming and, more specifically, join ordering. We will focus in particular on the $(+, \cdot)$ ring and the $(\min, +)$ and $(\min, \max)$ semi-rings. Note that we are only aiming for an intuitive understanding of why supporting the latter is much more different than the simple ring setting, where fast subset convolution can work directly in $O(2^n n^2)$ time, as shown before.

\sparagraph{Intuition.} Within the pair of operators of a (semi-)ring, the first one is decisive. In our case, while the $(+, \cdot)$ ring has ``+'' as its first operator, both $(\min, +)$ and $(\max, \max)$ have ``$\min$''. To understand the contrast between these, consider the following illustrative example: If we calculate $2 + 3 + 5 = 10$ and want to remove one of the first terms, e.g., $2$, we can recover the sum of the \emph{other} terms by using the inverse of $2$, i.e., $10 + (-2) = 8$. Things are not so clear in the case of ``$\min$'': If we have $\min\{2, 3, 5\} = 2$ and want to remove $2$, we cannot simply recover the minimum of the remaining elements, $\min\{3, 5\}$. The underlying problem is that has $2$ no inverse. This example may seem artificial at first, but this very problem occurs when applying the inverse map to come back from the Fourier space -- essentially step (c) above. How can this be alleviated? What Bj\"orklund et al.~\cite{fsc} propose is a standard trick in algorithms: embed the $(\min, +)$ semi-ring in the $(+, \cdot)$ ring. We explain and exemplify this technique in Sec.~\ref{subsec:abstract_representation}.}

In the following, we relate the join ordering problem to fast subset convolution for the first time, and provide a unified framework that can be instantiated for several cost functions.

\begin{algorithm}[!t]
  \caption{\texttt{DPconv}: Using fast subset convolution (FSC) to gradually optimize the dynamic programming table.}\label{algo:dp_conv}
  \begin{algorithmic}[1]
  \STATE \textbf{Input:} Query graph $Q = (V, E)$, cardinality function $c$
  \STATE \textbf{Output:} Optimal cost value w.r.t. $\C$
  \STATE $\textsc{DP}[\varnothing] \leftarrow +\infty$
  \STATE $\textsc{DP}[\{R_i\}] \leftarrow 0, \forall R_i \in V$
  \FOR{\textbf{each} $k$ \textbf{in} $2, \ldots, |V|$}
      \STATE $\textsc{DP}' \leftarrow \textsc{FSC}_{(\min, \otimes)}(\textsc{DP}, \textsc{DP})$
      \STATE $\textsc{DP}[S] \leftarrow \textsc{DP}'[S] \otimes c(S), \forall S$ s.t. $|S| = k$
  \ENDFOR
  \RETURN\!$\textsc{DP}[V]$
  \end{algorithmic}
\end{algorithm}

\section{Our Framework}\label{sec:framework}

Let us consider the optimization of an \emph{arbitrary} cost function $\C$ in its associated $(\min, \otimes)$ semi-ring under a generic framework. We will then instantiate the framework for $\Cout$ and $\Cmax$, respectively.

\subsection{Join Ordering Meets Subset Convolution}\label{subsec:framework_jo_meets_sc}

The key observation behind our results is the (now trivial) observation that the definition of DP-recursion, Eq.~\eqref{eq:dp}, is similar to that of subset convolution, Eq.~\eqref{eq:subset_conv_min_plus}. In particular, we show that join ordering falls into the category of dynamic programs which fast subset convolution has already been applied to. In our specific context, there are a few (minor) issues that need to be addressed for FSC to be applicable, issues that have also been considered by Bj\"orklund et al.~\cite{fsc} for other problems, namely:

\begin{enumerate}[label=(\roman*)]
\item The dynamic program $\textsc{DP}$, Eq.~\eqref{eq:dp}, is defined recursively.
\item The subset $T$ of $S$ in the same Eq.~\eqref{eq:dp} must not take $\varnothing$ nor $S$ as value.
\end{enumerate}

\sparagraph{Overview.} Both issues are resolved by a simple technique: We apply FSC layer-wise, i.e., we optimize sets of size 2 first, then those of size 3, and so on -- this is what we call a \emph{layer}. Specifically, at each layer $k$, since the $\textsc{DP}$-table has been computed for layers $k' < k$, we can directly optimize $\textsc{DP}[S]$ for all $S$ with $|S| = k$ by a call to FSC. To alleviate issue (ii), we set $\DP[\varnothing]$, i.e., the DP-cell representing the empty set of relations, to $+\infty$.\footnote{This is not mathematically rigorous. One has to \emph{define} what $+\infty$ in the specific semi-ring is.} Since FSC is called $n$ times, the total optimization time adds up to $O(2^n n^3 \tau_\C)$, where the function $\tau_\C$ is tailored to the specific cost function $\C$ and accounts for the time overhead of semi-ring operations. We will cover the exact expressions of $\tau_\C$ for individual cost functions in the following sections (see Sec.~\ref{subsec:c_out} for $\Cout$ and Sec.~\ref{subsec:c_max} for $\Cmax$, respectively).

Note that we will shave a factor of $O(n)$ from the running time of generic FSC-based dynamic programs, including ours, in Sec.~\ref{sec:layer_dp}, thus reducing the total running time to $O(2^n n^2 \tau_C)$-time for a cost function $\C$.

\sparagraph{Pseudocode.} In Alg.~\ref{algo:dp_conv}, we outline the pseudocode behind our framework \texttt{DPconv}. It takes the query graph $Q$ and the join cardinality function $c$ as input and outputs the optimal cost value w.r.t. the specific cost function $\C$ to which the semi-ring $(\min, \otimes)$ corresponds. It first optimizes the base cases, namely for the empty set of relations and for all sets containing only one relation. The former are initialized with $+\infty$, as argued above, and the latter with 0, cf.~Eq.~\eqref{eq:dp}. Then, at each layer $k$, we optimize the subsets of size $k$ by calling FSC on the current state of the $\textsc{DP}$-table (line 6) and then update the values with the join cardinalities of the subsets (line 7). To this end, note that Alg.~\ref{algo:dp_conv} can optimize for cross-products out-of-the-box: We simply need to also use the cardinalities of all cross-products in $c$. The running time remains naturally the same. Finally, we return the optimal cost.

\begin{algorithm}[!t]
  \caption{\texttt{BuildJoinTree}: Recursively extracting the optimal bushy join tree from the $\textsc{DP}$-table}\label{algo:build_join_tree}
  \begin{algorithmic}[1]
  \STATE \textbf{Input:} Subset of relations $S$, \textsc{DP}-table
  \STATE \textbf{Output:} The optimal bushy join tree
  \STATE \textbf{if} $|S| = 1$ \textbf{return} $S$ \textbf{end if}
  \FOR{\textbf{each} $\varnothing \subset T \subset S$}
    \IF{$c(S) \otimes \textsc{DP}(T) \otimes \textsc{DP}(S \setminus T) = \textsc{DP}(S)$}
        \RETURN\!$(\texttt{BuildJoinTree}(T),\:\texttt{BuildJoinTree}(S \setminus T))$
    \ENDIF
  \ENDFOR
  \end{algorithmic}
\end{algorithm}

\sparagraph{Join Tree Extraction.}\label{par:join_tree_extraction} Note that unlike previous algorithms, Alg.~\ref{algo:dp_conv} does not maintain an \textsc{OPT}-table that stores the optimal split for each subset $S$. This is because \textsc{FSC} itself does not keep track of this information during its execution. In contrast, after the $\textsc{DP}$-table is fully-optimized, we can extract the optimal join tree from the \textsc{DP}-table itself, as outlined in Alg.~\ref{algo:build_join_tree}. Specifically, for each set $S$, we find the subset $T$ that was \emph{intrinsically} used in $\textsc{FSC}$ to optimize $S$, i.e., $\textsc{DP}[S] = \textsc{DP}[T] + \textsc{DP}[S \setminus T]$. Since there are at most $n$ levels of recursion, the worst-case running time for Alg.~\ref{algo:build_join_tree} reads $O(2^n n)$.

\mrrd{
\sparagraph{Cross-Products.} While allowing cross-products can lead to better overall costs~\cite{lohman_cross_products, radke}, the search space increases exponentially~\cite{lohman_cross_products}. A prominent way to deal with cross-products is to heuristically insert them when they are guaranteed to be beneficial~\cite{lohman_cross_products}; this tends to be the case when the estimated cardinality of the input is small enough~\cite{lohman1988grammar_cross_products, radke}.

A natural question is whether \DPconv{} could also support the optimization of cross-products, and whether this particular optimization would take more time than previously specified. Similar to \DPsub{}~\cite{vance_maier, vance_phdthesis}, we can use the cardinalities of the cross-products directly in $c$ (line 7, Alg.~\ref{algo:dp_conv}). This means that \DPconv{} can optimize for cross-products for both cost functions without any overhead.
}

\mrre{
\sparagraph{Query Hypergraphs.} A standard way to model arbitrary non-inner joins in the query graph is to introduce corresponding binary join hyperedges. A binary join hyperedge $h = (A, B)$ connects two sets of relations $A$ and $B$~\cite{dphyp}. This is a generalization of the regular join edge which connects only two relations. In the query hypergraph setting, whenever we want to join two sets of relations connected by a hyperedge, we have to check that both sides are themselves connected \emph{and} there is a hyperedge connecting them. Fortunately, since Eq.~\eqref{eq:dp} enforces that the join cardinalities are taken into account only \emph{after} the DP-layer has been optimized (lines 6-7, Alg.~\ref{algo:dp_conv}), we can directly specify which subgraphs are connected (using Ref.~\cite{dphyp}); this is independent of whether the subgraph contains hyperedges or not. Extending our framework to optimize group-by operators optimally, as in Eich et al.~\cite{eich_groupby}, is an interesting future work.
}

We now come to the embedding technique we motivated and mentioned in Sec.~\ref{subsec:back_subset_conv} that helps us leverage the running time of the fast subset convolution to our employed semi-rings.

\mrrrc{
\subsection{Embedding Technique}\label{subsec:abstract_representation}

Recall the motivating example in the section on rings and semi-rings (Sec.~\ref{subsec:ring_semiring}). To enable the existence of the inverse element, Bj\"orklund et al.~\cite{fsc} propose to embed the semi-ring into a ring,

\sparagraph{Polynomials to the Rescue.} The embedding technique maps the values of the set functions to monomials and then runs the fast subset convolution algorithm in the $(+, \cdot)$ ring. The convolution values can then be read from the resulting polynomials. To see why this works, consider the functions \texttt{[2, 1, 3, 4]} and \texttt{[5, 0, 1, 2]}. When we embed these set functions to monomials, we obtain $\texttt{[}x^\texttt{2}, x^\texttt{1}, x^\texttt{3}, x^\texttt{4}\texttt{]}$ and $\texttt{[}x^\texttt{5}, x^\texttt{0}, x^\texttt{1}, x^\texttt{2}\texttt{]}$, respectively. Thus, by running their subset convolution $\texttt{[}x^\texttt{2}, x^\texttt{1}, x^\texttt{3}, x^\texttt{4}\texttt{]} \ast \texttt{[}x^\texttt{5}, x^\texttt{0}, x^\texttt{1}, x^\texttt{2}\texttt{]}$, we can retrieve the final values as follows: Consider the value at \texttt{001}, which is $x^{\texttt{2+0}} + x^{\texttt{1+5}}$. Note that multiplication between monomials is simply an addition at the exponent level, while the minimum -- in our case, $\min\{2 + 0, 1 + 5\}$ -- is represented by the smallest exponent in the resulting polynomial.

\sparagraph{Representation.} To allow for a seamless instantiation of our framework for other cost functions, we represent the polynomials in \emph{coefficient form}, i.e., pairs of exponents and their associated coefficients. For instance, we represent $2 x + 3x^4$ as $\{(1, 2), (4, 3)\}$.

\sparagraph{Limitation.} The core limitation of the embedding technique is that the size of the coefficient forms exactly corresponds to the largest input value. The reason is that value will be the largest exponent in the entire embedding of the corresponding set function.

In the following, we instantiate the framework for $\Cout$ and $\Cmax$, respectively. In Sec.~\ref{sec:simple_dpconv_cmax}, we show a simpler algorithm to optimize for $\Cmax$ that bypasses the need for the embedding technique.
}

~\\
\subsection{\other{Instantiating} $\Cout$}\label{subsec:c_out}

\other{In the case of $\Cout$, we are working in the $(\min, +)$ semi-ring. To implement the embedding, we simply need to specify how the ``+'' operator should work -- in the most general form, the ``$\otimes$'' operator; compare Eq.~\eqref{eq:general_c}. This corresponds to polynomial multiplication in the coefficient form. Let $P_1$ and $P_2$ be two polynomials in coefficient form. Then $P_1 \otimes P_2$ for an exponent $e$ is defined as}
\begin{equation}
  (P_1 \otimes P_2)(e) = \sum_{\substack{(e_1, c_1) \in P_1\\(e_2, c_2) \in P_2\\e_1 + e_2\:=\:e}} c_1 c_2.
\label{eq:cout_convolution}
\end{equation}
Since the maximum value of the $\Cout$ cost function could be $Wn$ (recall that $W$ is the largest join cardinality) and assuming a FFT-based implementation of the convolution in Eq.~\eqref{eq:cout_convolution}, the factor $\tau_{\out}$ for supporting $\Cout$ is $O(Wn \log Wn)$.

\subsection{\other{Instantiating} $\Cmax$}\label{subsec:c_max}

\other{We now specify the embedding for the $(\min, \max)$ semi-ring. Unlike $\Cout$, we need to specify how the ``$\max$'' operator should work. Namely, the coefficient of exponent $e$ of two polynomials $P_1$ and $P_2$ in coefficient form reads:}
\begin{equation}
  (P_1 \otimes P_2)(e) = \sum_{\substack{(e_1, c_1)\:\in\:P_1\\(e_2, c_2)\:\in\:P_2\\\max(e_1, e_2)\:=\:e}} c_1 c_2. 
\label{eq:cmax_convolution}
\end{equation}
\other{
The intuition is that all exponents \emph{below} $e$ contribute to its final coefficient. If used as in Eq.~\eqref{eq:cmax_convolution}, the size of the coefficient form will still be $W$, as in the case of $\Cout$; this is prohibitively expensive. While there is indeed a way to mitigate this and obtain a running time of $O(2^n n^4)$, which is independent of $W$, we discovered a much simpler algorithm with an even better running time of $O(2^n n^3)$, which does not require the embedding technique. This is understandable due to the fact that, in the $(\min, \max)$ semi-ring, we are not creating \emph{new} values, as is the case in $\Cout$. To not burden the reader with the technicalities of the first approach, we will present directly this algorithm (see Sec.~\ref{sec:simple_dpconv_cmax}).
}

\mrrc{
\subsection{Beyond $\Cout$ and $\Cmax$}\label{subsec:dpconv_other_cfs}

Beside $\Cout$ and $\Cmax$, literature on join ordering also considers other cost functions: Moerkotte~\cite{Moerkotte2006BuildingQC} mentions cost functions related to (a) nested-loop, (b) hash, and (c) sort-merge joins. These cost functions have been mainly designed for left-deep join trees. However, it is an easy exercise to remodel them to work on bushy joins trees. We show that \DPconv{} can be extended to the cost function associated to the sort-merge join.

The problem is that these cost functions require that $c(T)$ be rewritten in terms of $\T_1$ and $\T_2$. That is, instead of $c(T)$, we would need to write $c(T_1, T_2)$; see the below example. The key idea to solve this is to first check whether $c(T_1, T_2)$ can be separated into two independent terms depending only on $T_1$ and $T_2$, respectively.

\sparagraph{When It Works.} We take as running example the sort-merge join cost. Adapting the definition by Moerkotte~\cite[Sec.~3.1.3]{Moerkotte2006BuildingQC}, we have:}
\begin{equation}
\C_{\text{smj}}(\T) =
\begin{cases}
    0, & \text{if } \T \text{ is a single relation} \\
    c(T_1) \log c(T_1)\\\:+\:c(T_2) \log c(T_2)\\\:+ \:\Csmj(\T_1) + \Csmj(\T_2), & \text{if } \T = \T_1 \Join \T_2,
\end{cases}\label{eq:c_smj}
\end{equation}
\mrrc{
where $T_1$ and $T_2$ are the set of relations corresponding to $\T_1$ and $\T_2$, respectively, and $c(T_1)$ and $c(T_2)$ are the corresponding join cardinalities. Note the change to our original $c(T)$ in Eq.~\eqref{eq:general_c}: The split $\T = \T_1 \Join\T_2$ now plays a role. For our framework, this means that we can no longer simply optimize the subset convolution part separately (see line 6 of Alg.~\ref{algo:dp_conv}). We also need to account for the actual sort-merge join cost at the current join, $c(T_1) \log c(T_1) + c(T_2) \log c(T_2)$. However, there is a simple solution to fix this: We can separate the sort-merge join cost and integrate that into the corresponding side, i.e., either $\T_1$ or $\T_2$. Concretely, we need to modify line 6 in Alg.~\ref{algo:dp_conv} as follows:
\[
    \textsc{FSC}_{(\min, +)}(\DP + c \log c),
\]
where the inner function, $\DP + c \log c$, is applied point-wise to each set $S \subseteq [n]$. Put simple, we also add to each DP-entry the sort-merge join cost corresponding to each side. This does not incur a large overhead in the optimization time, as the addition can be performed when pre-processing the zeta transforms of the DP-layers (see Sec.~\ref{subsec:layer_wise_zeta}).

\sparagraph{When It Does Not Work.} Note that this adaptation to the sort-merge join worked because we could split the initial $c(T_1, T_2)$ into two separate cost factors that could be ``sinked'' in the entries of the DP table. To support other cost functions, they need to have a similar \emph{additive} separation property. For instance, the nested-loop join cost, $c(T_1)c(T_2)$, does \emph{not} enjoy this property. This means that our subset convolution based framework, \DPconv{}, cannot be extended to this cost function. A similar situation holds for the hash-join cost, at least in the (classic) setting we are considering: $c(T_1, T_2) = 1.2 \max\{c(T_1), c(T_2)\}$. This extension from Moerkotte's definition mainly designed for left-deep join trees~\cite[Sec.~3.1.3]{Moerkotte2006BuildingQC} takes into account that the hash-table is built on the smaller side, known as the build side. The issue is that ``$\max$'' in $c(T_1, T_2)$ destroys its additive separability into two cost functions depending only on $T_1$ and $T_2$. Therefore, the hash-join cost function cannot also be supported in our framework. 
}
\section{Fast Subset Convolution}\label{sec:FSC}

We next describe Fast Subset Convolution (FSC). We take a closer look at FSC from a practical perspective, so that in Sec.~\ref{sec:layer_dp} we can shave the promised $O(n)$-time factor from the running time of $\texttt{DPconv}$. In the following, we adopt the notation from the Parameterized Algorithms book~\cite{Cygan2015_chapter}, since it has established itself in the literature compared to that of Bj\"orklund et al.~\cite{fsc}.

\subsection{Zeta Transform}

A fundamental operation in FSC is the zeta transform, defined as
\begin{equation}
    (\zeta f)(S) = \displaystyle\sum_{T \subseteq S} f(T),
\label{eq:zeta}
\end{equation}
for any $S \subseteq [n]$. \mrrra{That is, the zeta transform sums $f$ at all subsets of $S$. Naively, this can be computed in $O(3^n)$-time for all $S\subseteq[n]$.} However, we can compute it in $O(2^n n)$-time by observing that we can reuse the computation done for subsets. We detail this in Sec.~\ref{subsec:fsc_impl}.

\subsection{Ranked Convolution}\label{subsec:ranked_convolution}

Given $\zeta f$ and $\zeta g$, the zeta transform of the actual convolution $h = f \ast g$, i.e., $\zeta h$, can now be computed point-wise. To this end, Bj\"orklund et al.~\cite{fsc} employ a \emph{ranked} convolution. Formally,
\begin{equation}
    (\zeta h)(S, r) = \displaystyle\sum_{d = 0}^{r} (\zeta f)(S, d)(\zeta g)(S, r - d),
\label{eq:ranked_convolution}
\end{equation}
for any $S \subseteq [n]$, where $|S| = r$. Thus, we have to apply a zeta transform for \emph{each} rank, i.e., for each cardinality in $\{0, \ldots, n\}$. The ranked convolution can then be computed naively in $O(2^n n^2)$, as for each rank $r$ we need to iterate over all $d \leq r$.

\subsection{M\"obius Transform}\label{subsec:moebius_transform}
To obtain the actual convolution, one applies the M\"obius transform rank-wise. The M\"obius transform is indeed the \emph{inverse} of the zeta transform, i.e., $\zeta \mu = \mu \zeta = \text{id}$, and is defined for any $S \subseteq [n]$ as
\begin{equation}
    (\mu f)(S) = \displaystyle\sum_{T \subseteq S} (-1)^{|T|} f(T).
\label{eq:moebius}
\end{equation}
% We visualize the zeta and M\"obius transforms on a subset lattice of order 3 in Fig.~\ref{fig:zeta_moebius}.
A full-fledged example of FSC is shown in Sec.~\ref{subsec:running_example} and its associated Fig.~\ref{fig:subset_conv_vis}.

% \begin{figure}[h]
%     \centering
%     \includegraphics[width=0.5\textwidth]{revision-figures/_DPconv__Revision_Figures-51.pdf}
%     \caption{Example: The zeta and M\"obius transforms.}
%     \label{fig:zeta_moebius}
% \end{figure}

\subsection{Implementation}\label{subsec:fsc_impl}

\subsubsection{Zeta Transform}

\begin{figure*}
    \centering
    \includegraphics[width=\textwidth]{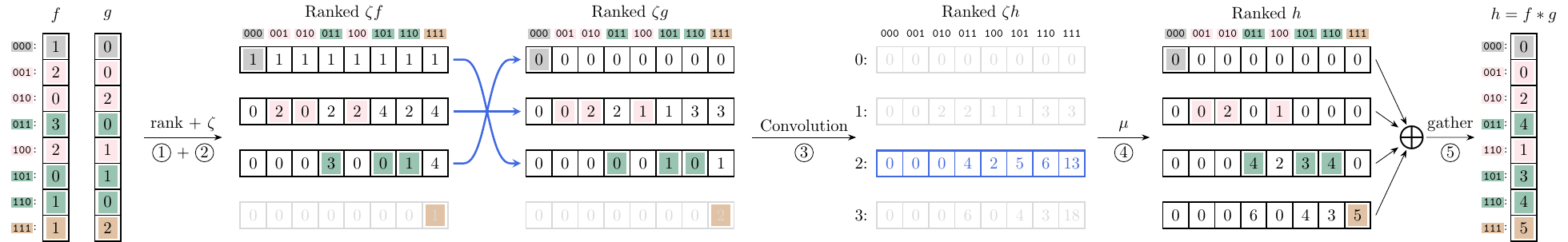}
    \caption{\mrrrbno{Visualizing the fast subset convolution (FSC), outlined in Lst.~\ref{lst:fsc_impl}: \mycircle{1} We rank the set functions $f$ and $g$ and \mycircle{2} apply the zeta transform to obtain $\zeta f$ and $\zeta g$, respectively. \mycircle{3} We perform the ranked convolution between $\zeta f$ and $\zeta g$. \mycircle{4} We apply the M\"obius transform to obtain the ranked $h$. \mycircle{5} Finally, we reconstitute $h = f \ast g$, the actual subset convolution. We highlight in \textcolor{royalblue}{color} the steps needed to compute the second rank ``slice'' of $\zeta h$, namely $(\zeta h)(:, 2)$, during ranked convolution (as in Sec.~\ref{subsec:ranked_convolution}). Intuitively, we need to sum up the dot products between the corresponding slices, i.e., $(\zeta f)(:, 0)$ with $(\zeta g)(:, 2)$, $(\zeta f)(:, 1)$ with $(\zeta g)(:, 1)$, and $(\zeta f)(:, 2)$ with $(\zeta g)(:, 0)$.}}
    \label{fig:subset_conv_vis}
\end{figure*}

A naive evaluation of Eq.~\eqref{eq:zeta} leads to an $O(3^n)$-time algorithm, as for each subset we are to sum up along all its subsets. However, there is a faster way computing it, commonly referred to as Yates' algorithm~\cite{frankyates1937}. Define $\hat f_0(S) = f(S)$ for all $S \subseteq [n]$, and then iterate for all $j = 1, 2, \ldots, n$ and $S \subseteq [n]$ as follows~\cite{fsc}:
\begin{equation}
	\hat f_j(S)=
	\begin{cases}
	\hat f_{j-1}(S) & \text{if $j\not\in S$},\\
	\hat f_{j-1}(S\setminus\{j\})+\hat f_{j-1}(S)  & \text{if
	$j\in S$}.
\end{cases}
\label{eq:zeta_eff}
\end{equation}
By induction, one can show that $\hat f_n(S) = (\zeta f)(S)$ for all $S \subseteq [n]$. The computation of Eq.~\eqref{eq:zeta_eff} takes $O(2^n n)$ operations, as for each subset $S$ we need to iterate over its elements. Lst.~\ref{lst:zeta_impl} shows an implementation of Eq.~\eqref{eq:zeta_eff}.
% Zeta transform
\lstinputlisting[
    caption=Zeta transform,
    label={lst:zeta_impl},
    language=C++,
    style=myStyle,
    captionpos=b
]{algorithms/zeta_transform.cpp}

% FSC.
\begin{figure}[t]
    \begin{tikzpicture}
        \node[inner sep=0pt, anchor=north west] (code) at (0,0) {
            \lstinputlisting[
                caption={Fast subset convolution, visualized in Fig.~\ref{fig:subset_conv_vis}.},
                label={lst:fsc_impl},
                language=C++,
                style=myStyle,
                captionpos=b
            ]{algorithms/fsc.cpp}
        };

        % Coordinates for the curly brackets and circled numbers.
        \coordinate (A) at ($(code.north east) + (0.5, -1.0)$);
        \coordinate (B) at ($(code.north east) + (0.5, -2.8)$);
        \coordinate (C) at ($(code.north east) + (0.5, -4.3)$);
        \coordinate (D) at ($(code.north east) + (0.5, -5.8)$);
        \coordinate (E) at ($(code.north east) + (0.5, -7.0)$);

        % Curly braces and circled numbers.
        \draw[decorate,decoration={brace,amplitude=5pt},thick] (A) -- ++(0,-1.5) node[midway,xshift=15pt] {\mycircle{1}};
        \draw[decorate,decoration={brace,amplitude=5pt},thick] (B) -- ++(0,-1.25) node[midway,xshift=15pt] {\mycircle{2}};
        \draw[decorate,decoration={brace,amplitude=5pt},thick] (C) -- ++(0,-1.25) node[midway,xshift=15pt] {\mycircle{3}};
        \draw[decorate,decoration={brace,amplitude=5pt},thick] (D) -- ++(0,-1.00) node[midway,xshift=15pt] {\mycircle{4}};
        \draw[decorate,decoration={brace,amplitude=5pt},thick] (E) -- ++(0,-1.25) node[midway,xshift=15pt] {\mycircle{5}};
    \end{tikzpicture}
\end{figure}

\subsubsection{M\"obius Transform} The M\"obius transform can be computed in a similar way. Define $\check f_0(S) = f(S)$ for all $S \subseteq [n]$, and then evaluate the following recursion~\cite{fsc}:
\begin{equation}
	\check f_j(S)=
	\begin{cases}
	\check f_{j-1}(S) & \text{if $j\not\in S$},\\
	-\check f_{j-1}(S\setminus\{j\})+\check f_{j-1}(S)  & \text{if
	$j\in S$}.
\end{cases}
\label{eq:moebius_naive}
\end{equation}
Then one can show that $\check f_n(S) = (\mu f)(S)$ for all $S \subseteq [n]$ and the computation happens in $O(2^n n)$ operations as well.

A sketch of the entire FSC algorithm is shown in Lst.~\ref{lst:fsc_impl}. We use the already-established Python's slicing notation ``:'' to denote an entire axis of the array, in our case, indexed by bitsets corresponding to the actual sets of relations. \mrrrb{We next show a working example.}

\mrrrb{
\subsection{Example}\label{subsec:running_example}

To facilitate the understanding of how fast subset convolution works, we provide a working example in Fig.~\ref{fig:subset_conv_vis}. In particular, we visualize the steps of Lst.~\ref{lst:fsc_impl}. Our example considers two set functions, $f$ and $g$, of size 8, i.e., a subset lattice of size 3. Note that we have combined steps \mycircle{1} and \mycircle{2} in Fig.~\ref{fig:subset_conv_vis} into a single step.

\sparagraph{\mycircle{1} Rank.} In the first step, we create as many rank ``slices'' as there are set cardinalities; in our case, 4 rank slices. Initially, they all contain only the values corresponding to the positions of the same cardinality. For instance, the slice corresponding to rank 2 is initially comprised of the values at positions \texttt{011}, \texttt{101}, and \texttt{110}. Accordingly, these positions and values are displayed in the same color.

\sparagraph{\mycircle{2} Applying Zeta.} Once we created the rank slices, we can now apply the zeta transform to them. Recall its definition in Eq.~\eqref{eq:zeta}: For each set $S$, we sum all the values of $f$ (and analogously for $g$) of at the indices of $S$'s subsets. To show this, consider the same rank slice 2 of $\zeta f$: The value at \texttt{111} -- which is 4 -- is the sum of 3 + 1. Similar in the rank slice 1 of $\zeta g$: The value at $(\zeta g)(\texttt{111}, 1)$ is made up of the non-zero values $g(\texttt{010})$ and $g(\texttt{100})$.

\sparagraph{\mycircle{3} Ranked Convolution.} Once both ranked $\zeta f$ and $\zeta g$ have been computed, we can run the (ranked) convolution between them, as described in Eq.~\ref{eq:ranked_convolution}. We visualize the steps for computing the rank slice 2 of $\zeta h$ in Fig.~\ref{fig:subset_conv_vis}, namely: The colored arrows connecting $(\zeta f)(:, 0)$ with $(\zeta g)(:, 2)$, $(\zeta f)(:, 1)$ with $(\zeta g)(:, 1)$, and $(\zeta f)(:, 2)$ with $(\zeta g)(:, 0)$ show that we need to multiply these rank slices to obtain $(\zeta h)(:, 2)$. As pointed out in Eq.~\eqref{eq:ranked_convolution}, we simply perform a dot product between these and sum up the results. For instance, to obtain $(\zeta h)(\texttt{111}, 2)$, we need to perform the following calculation: $1 \cdot 1 + 4 \cdot 3 + 4 \cdot 0 = 13$.

\sparagraph{\mycircle{4} Applying M\"obius.} To obtain the actual ``ranked'' $h$, we have to apply the M\"obius transform onto ranked $\zeta h$. As explained in Sec.~\ref{subsec:moebius_transform}, the M\"obius transform is the \emph{inverse} of the zeta transform. Once this is done, the next paragraph explains how to obtain the final subset convolution result, $h$. The M\"obius transform is applied as in Eq.~\eqref{eq:moebius}, namely we consider all the subsets of a set $S$ and \emph{subtract} the values where the subset cardinality is odd, and \emph{add} those for even cardinality. For instance, $h(\texttt{111}, 2)$ is computed as follows: The values $\zeta h(\texttt{100}, 2), \zeta h(\texttt{111}, 2)$ are at odd cardinalities, so we subtract their values, while $\zeta h(\texttt{100}, 2), \zeta h(\texttt{101}, 2)$, and $\zeta h(\texttt{110}, 2)$ are at even cardinalities, so we add them. In total, these results in $-4 -13 + 2 + 5 + 6 = 0$, which is exactly $h(\texttt{111}, 2)$.

\sparagraph{\mycircle{5} Gather.} Finally, once the ranked $h$ has been fully computed by applying the M\"obius transform to $\zeta h$, we can obtain the final $h$ by taking a reverse process to step $\mycircle{1}$: Instead of scattering the set functions to rank slices, we now gather the rank slices into one set function. This is done by simply taking the positions from the corresponding rank slice and putting these into $h$; this is also highlighted by the corresponding colors. For instance, to collect the positions $\texttt{011}, \texttt{101},$ and $\texttt{110}$, we take them from the rank slice 2, since all these subsets have cardinality 2.
}
\subsection{Running Time}

Let us calculate the total running time of FSC: The $n$ zeta and M\"obius transforms take in total $O(2^n n^2)$-time, while the rank convolution itself takes $O(2^n n^2)$-time.

When used \emph{as is} in dynamic programming recursions, the running time of FSC is multiplied by a factor $O(n)$. To this end, we show in the next section how to improve the running time from $O(2^n n^3)$ to $O(2^n n^2)$. To motivate this, note that a slowdown of 20x for $n = 20$ in the context of join ordering can make the difference between a practical and an impractical algorithm. 
\section{Layered Dynamic Programming}\label{sec:layer_dp}

Subset convolution is usually employed in definitions of dynamic programs (as our own) where it is called to optimize the $k$th layer of the $\DP$-table (in our case, line 6 in Alg.~\ref{algo:dp_conv}). As previously argued, this call contains a lot of redundancy.

A first observation, $(\star)$, is that, even though we call FSC on the \emph{entire} $\DP$-table (Alg.~\ref{algo:dp_conv}, lines 6-7), we will only update the table for subsets $S$ of size exactly $k$. Taking a look at the internals of FSC explained in Sec.~\ref{sec:FSC}, we observe that the ranked convolution, Eq.~\eqref{eq:ranked_convolution}, actually computes $\hat h(:, r)$ for \emph{each} $r$ in \emph{each} call. This is detrimental, as we will use only the $k$th layer $\hat h(:, k)$ in the $k$th call. A second observation, $(\star\star)$, is that the $\DP$-table itself does not change for subsets of size less than $k$.

In the following, we explain how one can improve the computation of the transforms and that of the ranked convolution given these two observations to reduce the time-complexity of FSC-based DPs from $O(2^n n^3)$ to $O(2^n n^2)$, hence shaving a $O(n)$ factor.

\subsection{Layer-Wise Zeta Transform}\label{subsec:layer_wise_zeta}

With this setting in mind, we can adapt the zeta transforms\footnote{The plural is intended.} intrinsically used in Alg.~\ref{algo:dp_conv} to run faster. Namely, at layer $k > 1$, we do not need to \emph{recompute} the zeta transforms $(\zeta f)(:, j)$, with $j < k$, since due to observation $(\star\star)$, these do \emph{not} change once computed and can, hence, be cached and reused during the entire computation. Consequently, at the $k$th call to FSC, we only need to compute $(\zeta f)(:, k - 1)$.

\subsection{Layer-Wise Ranked Convolution}

In the same manner, by observation $(\star)$, we may skip the outer for-loop (line 2, Lst.~\ref{lst:fsc_impl}) and directly compute $(\zeta h)(:, k)$ once $(\zeta f)(:~,~k~-~1)$ has been computed as previously argued. Notably, due to symmetry -- recall that we actually call FSC with the $\DP$-table -- we may only iterate $d$ until $\lfloor\frac{i}{2}\rfloor$ and multiply $f(:, d)g(:, i - d)$ by~2 (apart from the case when $d$ is indeed equal to $i - d$). Finally, we apply the M\"obius transform on $(\zeta h)(:, k)$ to obtain the $k$th layer of the $\DP$-table (this is symbolically denoted by $\DP'$ in Alg.~\ref{algo:dp_conv}).

Alone these two optimizations shave an $O(n)$-overhead from the running time. We, however, present an additional optimization which, albeit does not reduce the asymptotic time complexity, it does indeed save another constant factor.

\subsection{Avoiding Useless Multiplications}

Recall that our algorithms will work with coefficient forms of polynomials, as described in Sec.~\ref{subsec:abstract_representation}. Hence, the multiplication operator in the ranked convolution (Eq.~\eqref{eq:ranked_convolution} and  Lst.~\ref{lst:fsc_impl}, line 13), is rather expensive since this corresponds to the ``$\otimes$'' operator. We show how to reduce the number of multiplications. This optimization also holds for the simpler algorithm for $\Cmax$ in Sec.~\ref{sec:simple_dpconv_cmax}.

The multiplications take place between \emph{ranked} zeta transforms. Thus, we have $(\zeta f)(S, r) = 0$ for $|S|~<~r$, for any rank $r$. This is because when we apply the zeta transform, we first fill the $r$th layer of the subset lattice with the values of $f(S)$ with $|S| = r$, and then, by construction, only \emph{supersets} will be iterated. Hence, for a rank $r$, sets $S$ with $|S| < r$ will never be touched in the computation of $(\zeta f)(:, r)$.

With this observation, we can prune the range of sets $S$ that we need to consider in line 13 (Lst.~\ref{lst:fsc_impl}) even further. In the following, let $f = g$, as in the context of \texttt{DPconv}. We have:
\begin{align*}
&\forall S.~|S| < d \Rightarrow (\zeta f)(S, d) = 0,\\
&\forall S.~|S| < r - d \Rightarrow (\zeta f)(S, r - d) = 0.
\end{align*}
Consequently, we can simply skip those sets $S$ with $|S|<\max(d, r - d)$. Another further optimization is to restrict ourselves to sets $S$ with $|S| \leq k$. This is because since we only require the $k$th layer, the M\"obius transform only needs to consider sets of maximum cardinality $k$.

% A question that the reader may ask is whether we cannot directly consider sets of cardinality $k$ in line 10 since we anyway output the solution for these sets only. For this, note that in line 10 we are computing $(\zeta h)(:, r)$, and hence we need the results for smaller sets such that the M\"obius transform, Eq.~\ref{eq:moebius}, can work.
\section{A Simple Algorithm for $\Cmax$}\label{sec:simple_dpconv_cmax}

While our framework can support $\Cmax$ (see Sec.~\ref{subsec:c_max}), we found another a much simpler algorithm that does not require the (rather intricate) implementation of the $(\min, \max)$ semi-ring.

\begin{algorithm}
    \DontPrintSemicolon
  \caption{Simpler \texttt{DPconv[max]}: Optimal cost w.r.t. $\Cmax$ in $O(2^n n^3)$-time}\label{algo:dpconv_max_simpler}
  \begin{algorithmic}[1]
  \STATE \textbf{Input:} Query graph $Q = (V, E)$
  \STATE \textbf{Output:} Optimal cost value w.r.t. $\Cmax$
  \STATE $cs \gets \texttt{sort}([c(S) \mid S \subseteq [|V|]], \texttt{decreasing=True})$
  \STATE $p, step \gets 0, 2^{|V| - 1}$
  \WHILE{$step > 0$}
    \STATE $\gamma \gets cs[p + step]$
    \STATE $\textsc{DP} \gets \textsc{LayeredDP}([c \leq \gamma])$ (Sec.~\ref{sec:layer_dp})
    \IF {$\textsc{DP}(V) > 0$}
        \STATE $p \gets p + step$
    \ENDIF
    \STATE $step \gets step\:/\:2$
  \ENDWHILE
  \RETURN\!$cs[p]$
  \end{algorithmic}
\end{algorithm}

\sparagraph{Key Idea.} The key insight is the following: Since we are applying only ``$\min$'' and ``$\max$'' operations, the optimal solution will take its value in the set of join cardinalities. Hence, we can \emph{binary search} the optimal value $\textsc{OPT}$. To check whether a given value $\gamma$ qualifies to be an optimal solution, we apply a technique used by Kosaraju for exact $(\min, \max)$ \emph{sequence} convolution~\cite{kosaraju_min_max}, which we will use on the DP-table itself. The strategy is to first put the DP-entries l.e.q.~$\gamma$ on 1 and those greater than $\gamma$ on 0, and then run FSC, in the $(+, \cdot$) ring, on this modified DP-table. In particular, this refers to one of the layers of the DP-table. We also use our improved layered dynamic programming described in Sec.~\ref{sec:layer_dp}.

\sparagraph{Pseudocode.} We outline the pseudocode of the algorithm in Alg.~\ref{algo:dpconv_max_simpler}. It first \emph{sorts} the join cardinalities in descending order and then performs a binary search on them, searching for the one which separates feasible $\gamma$'s from infeasible ones (note that the maximum join cardinality is always feasible, but may not be the optimum). To this end, we employ Iverson's bracket notation: Given a property $P$, $[P]$ returns 1 is the property is true, 0 otherwise. In our case, $[c \leq \gamma]$ is the following function: \[
    S \mapsto
    \begin{cases}
    1, & \text{if } c(S) \leq \gamma, \\
    0, & \text{otherwise}.
    \end{cases}
\]
Once the $\textsc{DP}$-table has been computed, the algorithm checks whether this value was feasible, i.e., whether $V$ has a positive value in the $\textsc{DP}$-table.
If that is the case, we search for smaller $\gamma$'s.
The algorithm concludes by returning the smallest $\gamma$ for which $\DP(V)$ is still positive.
In the same manner as for the standard $\DPconv$, we can build the join tree once we found the optimal value (see Alg.~\ref{algo:build_join_tree}). \mrrrd{We visualize Alg.~\ref{algo:dpconv_max_simpler} in Fig.~\ref{fig:explainy_figure}.}
\begin{figure}[h]
    \centering
\includegraphics[width=0.65\linewidth]{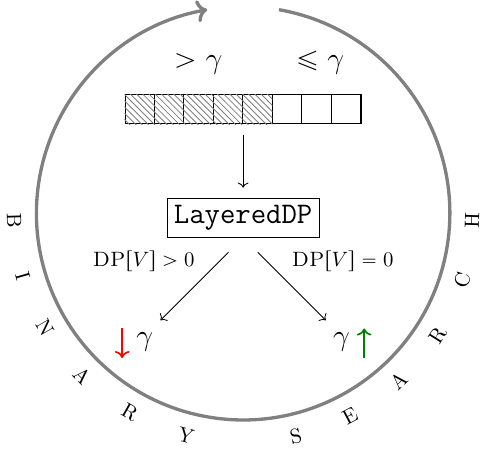}
    \caption{\mrrrdno{Visualizing Alg.~\ref{algo:dpconv_max_simpler}.}}
    \label{fig:explainy_figure}
\end{figure}

\sparagraph{Running Time.} Given our improved implementation of layered dynamic programming (Sec.~\ref{sec:layer_dp}), our new Alg.~\ref{algo:dpconv_max_simpler} runs in time $O(2^n \log 2^n + 2^n n^2 \log 2^n) = O(2^n n^3)$. The additional factor $\log 2^n$ in the second term comes from the running time of the binary search on the $2^n$-sized sorted list of join cardinalities.

\sparagraph{Constant-Factor Optimizations.} While this is sufficient to outperform the standard exact algorithm, there is still an optimization that can be done that only reduces the constant factor hidden in the running time. Namely, for the first layers of the DP-table, we directly hardcode the dynamic programming solution for subsets of cardinality l.e.q.~6. This removes the overhead of subset convolution for these small layers. Note that we still need to compute the zeta transforms of these layers, since they will be used in later layers (as in Lst.~\ref{lst:fsc_impl}, line~13).
\section{Approximation Algorithm}\label{sec:approx}

\sparagraph{Motivation.} A prominent result on approximation algorithms for the join ordering problem is due to Chatterji et al.~\cite{chatterji}, who show that in the case of linear join trees, the problem of approximating the optimal cost $K$ within a factor of $2^{\Theta(\log ^{1-\delta}K)}$ is \classNP-hard, for any $\delta > 0$. We approach the problem from the other end:
\begin{quote}
\centering
    How fast can we approximate the optimal $\Cout$ value within a factor of $(1+\eps)$?
\end{quote}
Indeed, a fast approximation algorithm can enable a faster evaluation of the optimal plan, while incurring a small overhead, specified by the precision parameter $\eps$.

\sparagraph{Our Approximation Algorithm.} To show the benefit of reducing the problem of join ordering to subset convolution, we now show how to obtain an $\widetilde O(2^{3n/2} / \sqeps)$-time approximation algorithm that optimizes $\Cout$ within a multiplicative factor of $(1 + \eps)$. In particular, note that our algorithm is still \emph{exponential}. However, it shows that the $O(3^n)$-time barrier can be overcome when asking about $(1 + \eps)$-approximation algorithms. This is the first result of this kind, which we state in Thm.~\ref{thm:approx_jo}. 

\subsection{Approximate Min-Sum Subset Convolution}

Following a recent result by Bringmann et al.~\cite{approx_min_plus} and the so far unexplored connection between min-plus sequence convolution and min-sum subset convolution, where the latter is the one we reduced join ordering to, Stoian~\cite{stoian_approx} has shown that min-sum subset convolution can be $(1 + \eps)$-approximated in $\widetilde O(2^{3n/2} / \sqeps)$-time; here, $\widetilde O$ hides poly-logarithmic factors in the input size and $\eps$ (note that the input size also consists of the join cardinality function, which is represented as a vector of size $2^n$). For completeness, this is their main theorem:
\begin{theorem}[{\cite[Thm.~3]{stoian_approx}}]
    \label{thm:approx_min_sum_subset_convolution}
    $(1+\eps)$-Approximate min-sum subset convolution can be solved in $\widetilde O(2^\frac{3n}{2} / \sqeps)$-time.
\end{theorem}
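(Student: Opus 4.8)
The plan is to follow the route the excerpt hints at: discretize the cost axis, reduce what remains to approximate min-plus \emph{sequence} convolution, and invoke Bringmann et al.~\cite{approx_min_plus} as a black box. First I would recall the exact picture. By the ranked-convolution identity of Sec.~\ref{sec:FSC}, evaluating $h = f \circ g$ in the $(\min,+)$ semi-ring reduces---up to $2^n \cdot n^{O(1)}$ bookkeeping for the zeta and M\"obius transforms---to a family of min-plus combinations carried out on the cost axis. Since $(\min,+)$ has no additive inverse while the M\"obius step of Eq.~\eqref{eq:moebius} requires one, one is forced into the polynomial embedding of Sec.~\ref{subsec:abstract_representation}: send a cost $v$ to the monomial $x^v$, run the ranked convolution in the $(+,\cdot)$ ring of polynomials, and read off the answer as the lowest-degree monomial with positive coefficient. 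Exactly, the polynomials have degree up to the largest cost $M$, which is what makes the exact algorithm cost $\widetilde O(2^n M)$; this is the starting point.

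Second, for the $(1+\eps)$-approximation I would round every input value \emph{down} to the nearest integer power of $(1+\eps)$ before embedding, and bucket $f$ and $g$ by scale. For a fixed pair of scales, the surviving exponents---after rescaling to the common precision $\Theta(\eps L)$, where $L$ is the larger of the two scales---occupy an integer interval of length only $\widetilde O(\reveps)$; because the larger summand fixes the scale of the output up to a factor of two, pairing scales costs only a constant factor in accuracy and an $n^{O(1)}$ factor (over the $O(\log^2 M)$ scale pairs) in time. Each bucket is then a small-degree \emph{exact} instance, and running the polynomial embedding at degree $\widetilde O(\reveps)$ already yields $\widetilde O(2^n/\eps)$ total---which on its own breaks the $O(3^n)$ barrier. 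I would present this ``easy half'' first.

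To obtain the stated $1/\sqeps$ dependence (hence the precise $\widetilde O(2^{3n/2}/\sqeps)$), I would observe that inside each bucket the per-set, per-rank polynomial products are bounded-length sequence convolutions whose operands---being zeta transforms---are monotone; these match the instances for which Bringmann et al.~\cite{approx_min_plus} obtain a $1/\sqeps$- rather than a $1/\eps$-dependence, the saving coming from their structural lemma that only an $O(\sqeps)$-fraction of the scale pairs of a min-plus convolution can be simultaneously near-tight. Summing the resulting instance lengths over all sets $S$, all ranks, and all scale buckets is then claimed to total $\widetilde O(2^{3n/2}/\sqeps)$; carrying this count out cleanly---in particular pinning down where the extra $2^{n/2}$ over the naive $2^n$ comes from---is one of the two places that genuinely needs work.

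The other, and the main obstacle, is the error analysis. The ranked convolution and M\"obius inversion run in an \emph{exact} ring, but once costs are rounded the ``lowest positive monomial'' of the recovered polynomial is only an approximate minimum, so I would need to verify that the multiplicative errors from (a) the power-of-$(1+\eps)$ rounding, (b) the per-bucket rescaling, and (c) the $(1+\eps)$-approximate sequence convolutions do not compound across the $n$ levels of the subset lattice. The standard fix---replace $\eps$ by $\eps/\Theta(n)$ throughout---absorbs the compounding into an $n^{O(1)}$ factor that $\widetilde O$ hides, but one still has to check it meshes with the scale-bucketing, so that the scale pair realizing the true minimum is the one whose rounded value survives. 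Finally I would confirm that Bringmann et al.'s guarantee really does apply to the monotone instances produced here, since it is precisely the $1/\sqeps$---not the trivial $1/\eps$---that the downstream join-ordering approximation of Sec.~\ref{sec:approx} relies on.
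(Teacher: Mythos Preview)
The paper does not prove this theorem. It is explicitly imported from another work: the sentence immediately preceding the statement reads ``Stoian~\cite{stoian_approx} has shown that min-sum subset convolution can be $(1 + \eps)$-approximated in $\widetilde O(2^{3n/2} / \sqeps)$-time \ldots\ For completeness, this is their main theorem,'' and the theorem is then stated with the attribution \textup{[Thm.~3, \cite{stoian_approx}]} and no proof. The only content the present paper offers about \emph{how} the result is obtained is the single remark that it follows from ``a recent result by Bringmann et al.~\cite{approx_min_plus} and the so far unexplored connection between min-plus sequence convolution and min-sum subset convolution.''

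Consequently there is no in-paper proof to compare your proposal against. Your sketch is consonant with that one-line hint---you do reduce to approximate min-plus sequence convolution and invoke Bringmann et al.\ as a black box---but everything beyond that (the bucketing, the origin of the $2^{n/2}$ overhead, the interaction of rounding with the M\"obius step, the $1/\sqeps$ vs.\ $1/\eps$ distinction) is material that, if it is anywhere, is in the cited reference~\cite{stoian_approx} rather than in this paper. You have correctly flagged the two genuinely delicate points yourself: accounting for the $2^{3n/2}$ total and ensuring the M\"obius-side cancellations survive rounding; the paper gives you no help on either.
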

This result implied approximation algorithms for several problems that reduce to subset convolution, e.g., the prize-collecting Steiner tree problem~\cite{rehfeldt2022exact}. Thus, by our reduction of the join ordering problem to min-sum subset convolution in Sec.~\ref{sec:framework}, we can obtain an $(1 + \eps)$-approximation algorithm for the join ordering problem as well.

% brings our problem into this family.

\begin{figure}
    \centering
    \includegraphics[width=0.35\textwidth]{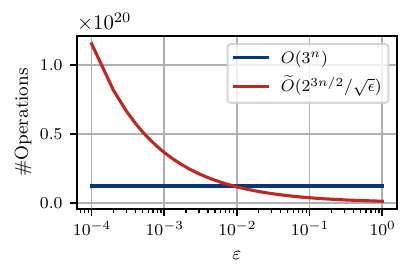}
    \caption{Theoretical number of operations of the exact $O(3^n)$-time algorithm and the $\widetilde O(2^{3n/2} / \sqrt{\varepsilon})$-time $(1+\eps)$-approximation algorithm for $n = 40$ and varying $\varepsilon$'s.}
    \label{fig:approx_plot}
\end{figure}

\subsection{Approximate Join Ordering}

The approximation algorithm follows their simple scheme:

\begin{theorem}
\label{thm:approx_jo}
If $(1 + \eps)$-approximate min-sum subset convolution runs in $T(n, \eps)$-time, then $(1 + \eps)$-approximate join ordering can be solved in $O(T(n, \frac{\eps}{n-1}))$-time.
\end{theorem}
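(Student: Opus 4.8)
The plan is to instantiate the framework of Alg.~\ref{algo:dp_conv} for $\Cout$ (so that ``$\otimes$'' is ``$+$''), but to replace the exact fast subset convolution call in line~6 by a call to the $(1+\eps')$-approximate min-sum subset convolution of Thm.~\ref{thm:approx_min_sum_subset_convolution}, for a per-layer accuracy $\eps'$ that I fix at the end. Nothing else changes: the base cases $\textsc{DP}[\{R_i\}]=0$ and the sentinel $\textsc{DP}[\varnothing]=M$ for some $M$ larger than every attainable $\Cout$ value (say $M=Wn+1$) so that the trivial splits $T=\varnothing$ and $T=S$ are never selected, and the \emph{exact} update $\textsc{DP}[S]\leftarrow\textsc{DP}'[S]+c(S)$ of line~7. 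We run this for layers $k=2,\dots,n$ (with $n=|V|$), i.e.\ $n-1$ approximate-convolution calls, return $\textsc{DP}[V]$, and recover the tree with Alg.~\ref{algo:build_join_tree}. This is exactly ``the simple scheme'': turn an approximate convolution subroutine into an approximate layered DP.

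The core of the argument is an error-propagation lemma. Let $\textsc{DP}^*$ be the exact table and $\widetilde{\textsc{DP}}$ the table produced above. I claim
\[
  \textsc{DP}^*(S)\;\le\;\widetilde{\textsc{DP}}(S)\;\le\;(1+\eps')^{|S|-1}\,\textsc{DP}^*(S)
  \qquad\text{for all }S\subseteq V,
\]
and I would prove it by induction on $|S|$. The base case $|S|=1$ is trivial, both sides being $0$. For $|S|=k\ge2$, let $(T^*,S\setminus T^*)$ be the optimal split of $S$ in the exact recursion; both parts have size at most $k-1$, so by the inductive hypothesis $\widetilde{\textsc{DP}}(T^*)+\widetilde{\textsc{DP}}(S\setminus T^*)\le(1+\eps')^{k-2}\bigl(\textsc{DP}^*(T^*)+\textsc{DP}^*(S\setminus T^*)\bigr)$. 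The approximate convolution over-estimates the minimum over all admissible splits of the \emph{current} table by at most a factor $(1+\eps')$, and $T^*$ is one such split, so $\textsc{DP}'[S]\le(1+\eps')^{k-1}\bigl(\textsc{DP}^*(T^*)+\textsc{DP}^*(S\setminus T^*)\bigr)$; adding $c(S)$ exactly and using $(1+\eps')^{k-1}\ge1$ gives $\widetilde{\textsc{DP}}(S)\le(1+\eps')^{k-1}\textsc{DP}^*(S)$. For the lower bound, the subroutine never under-estimates and $\widetilde{\textsc{DP}}\ge\textsc{DP}^*$ holds on all lower layers, so monotonicity of $(f,g)\mapsto c(S)+\min_T\bigl(f(T)+g(S\setminus T)\bigr)$ yields $\widetilde{\textsc{DP}}(S)\ge\textsc{DP}^*(S)$. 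The sentinel $M$ makes the trivial splits irrelevant throughout, just as in the exact algorithm.

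Applying the lemma at $S=V$ gives a $(1+\eps')^{n-1}$-approximation of $\textsc{DP}^*(V)$, so it only remains to choose $\eps'$ with $(1+\eps')^{n-1}\le1+\eps$; taking $\eps'=\ln(1+\eps)/(n-1)$ does it, since $(1+\eps')^{n-1}\le e^{(n-1)\eps'}=1+\eps$. As $\ln(1+\eps)=\Theta(\eps)$ for $\eps\in(0,1]$, this $\eps'$ is $\Theta\!\bigl(\eps/(n-1)\bigr)$, so the total cost is the $O(2^n n)$ of the base cases and tree extraction plus $n-1$ calls at accuracy $\Theta(\eps/(n-1))$, i.e.\ $O\!\bigl(n\cdot T(n,\Theta(\eps/(n-1)))\bigr)$; absorbing the poly-$n$ factors into the poly-logarithmic slack of $T$ (as in the bound $\widetilde O(2^{3n/2}/\sqeps)$) yields the claimed $O\!\bigl(T(n,\tfrac{\eps}{n-1})\bigr)$.

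I expect the main difficulties to be bookkeeping rather than conceptual: (i) ensuring the approximate min-sum subset convolution subroutine tolerates the large sentinel entry $M$ --- equivalently, that its running-time/accuracy guarantee is stated for inputs whose finite values are polynomially bounded, which $\Cout$ satisfies --- or else phrasing the call over nonempty proper subsets so $M$ never enters; and (ii) being careful that $(1+\eps')^{n-1}\le1+\eps$ genuinely requires the $\ln$-based choice of $\eps'$, since the naive $\eps'=\eps/(n-1)$ only yields $(1+\eps/(n-1))^{n-1}\le e^\eps=1+\eps+O(\eps^2)$. Neither affects the asymptotics.
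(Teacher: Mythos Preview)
Your proposal is correct and follows essentially the same approach as the paper: run Alg.~\ref{algo:dp_conv} with an $(1+\eps')$-approximate min-sum subset convolution at each of the $n-1$ layers, observe that the relative error compounds to $(1+\eps')^{n-1}$, and set $\eps'=\Theta(\eps/(n-1))$. Your write-up is in fact more careful than the paper's three-line proof---you supply the inductive error-propagation lemma, the explicit $\eps'=\ln(1+\eps)/(n-1)$, and you flag the extra factor~$n$ from the $n-1$ calls that the paper silently absorbs into the $\widetilde O$ of the concrete $T$.
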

\begin{proof}
Consider the evaluation of the min-sum subset convolution between the $\textsc{DP}$-table and itself at each of the $n - 1$ optimization layers; see Alg.~\ref{algo:dp_conv}, line 6. Fixing $\eps' > 0$ for each convolution call, we obtain a cumulative relative error bounded by $(1 + \eps')^{n - 1}$. By setting $\eps' = \Theta(\frac{\eps}{n - 1})$, we obtain a relative error of at most $\eps$.
\end{proof}
\begin{corollary}$(1 + \eps)$-Approximate join ordering can be solved in $\tilde O(2^\frac{3n}{2} / \sqeps)$-time.
\end{corollary}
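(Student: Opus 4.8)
The plan is to obtain the corollary by directly instantiating Theorem~\ref{thm:approx_jo} with the running time supplied by Theorem~\ref{thm:approx_min_sum_subset_convolution}. Concretely, Theorem~\ref{thm:approx_min_sum_subset_convolution} gives a $(1+\eps)$-approximate min-sum subset convolution running in $T(n,\eps) = \widetilde O(2^{3n/2}/\sqeps)$-time. Plugging this $T$ into Theorem~\ref{thm:approx_jo}, which applies the convolution routine at each of the $n-1$ optimization layers with the tightened precision $\eps' = \Theta(\eps/(n-1))$, yields a total running time of $O\big(T(n, \tfrac{\eps}{n-1})\big) = \widetilde O\big(2^{3n/2}/\sqrt{\eps/(n-1)}\big) = \widetilde O\big(2^{3n/2}\sqrt{n-1}/\sqeps\big)$.

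The only point that needs a word of care is that the stray $\sqrt{n-1}$ factor — and, more pedantically, the $n-1$ repetitions hidden inside the reduction of Theorem~\ref{thm:approx_jo} — must be swallowed by the $\widetilde O$. Here I would appeal to the convention fixed in Theorem~\ref{thm:approx_min_sum_subset_convolution}: $\widetilde O$ suppresses factors that are poly-logarithmic in the \emph{input size}, and the input size is $\Theta(2^n)$ (the join-cardinality vector together with the DP-tables), so ``poly-logarithmic in the input size'' is exactly $n^{O(1)}$. Since $\sqrt{n-1}\cdot(n-1) = O(n^{3/2}) = n^{O(1)}$, it is absorbed, leaving $\widetilde O(2^{3n/2}/\sqeps)$, as claimed.

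I do not anticipate any genuine obstacle: the statement is an immediate corollary of the two preceding theorems, and the argument above is essentially complete. The single subtlety worth stating explicitly in the writeup is the reading of $\widetilde O$ — under the ``polylog in input size $2^n$'' convention inherited from Theorem~\ref{thm:approx_min_sum_subset_convolution} it hides $\mathrm{poly}(n)$ factors, so no polynomial-in-$n$ term needs to surface in the final bound; were one instead to adopt the stricter ``polylog in $n$'' reading, the bound would have to be stated as $\widetilde O(2^{3n/2}\,\mathrm{poly}(n)/\sqeps)$.
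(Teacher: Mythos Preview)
Your proposal is correct and follows exactly the approach the paper intends: the corollary is stated without a separate proof, as it is meant to follow immediately by plugging Theorem~\ref{thm:approx_min_sum_subset_convolution} into Theorem~\ref{thm:approx_jo}. Your explicit treatment of the $\sqrt{n-1}$ factor and the $\widetilde O$ convention (polylog in the $2^n$-sized input, hence $n^{O(1)}$) is more careful than the paper itself, which leaves this absorption implicit.
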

In particular, we aim to optimize $\Cout$ (since our $\Cmax$ algorithm in Sec.~\ref{sec:simple_dpconv_cmax} already achieves a better running time). This is particularly interesting since the running time of the approximation algorithm does not depend on $W$, the largest join cardinality. To get an intuition for the running time of the approximation algorithm, we plot in Fig.~\ref{fig:approx_plot} the theoretical number of operations of the exact $O(3^n)$-time algorithm and the $\widetilde O(2^{3n/2} / \sqrt{\varepsilon})$-time $(1 + \varepsilon)$-approximation algorithm for $n = 40$ and varying $\varepsilon$'s. For $\varepsilon = 10^{-2}$, i.e., the optimal value is approximated by a multiplicative factor of $(1 + 10^{-2})$, the runtime of the approximation algorithm outperforms that of the exact algorithm.

Note that the intricate details in the approximation framework by Bringmann et al.~\cite{approx_min_plus} make it hard to have an immediate practical algorithm out of the above theoretical result. We discuss this in Sec.~\ref{sec:discussion}.
\section{Fusing $\Cout$ and $\Cmax$}\label{sec:ccap}

We can indeed regard the faster optimization of $\Cmax$ from another perspective: What if we could optimize the optimal $\Cout$-value under the constraint that the intermediate size is not too large? To show the motivation behind this problem, consider the optimization of $\Cout$ in the case of Q19d in JOB~\cite{job_first_paper}. When using the true cardinalities, the max. intermediate join size of the \emph{optimal} plan w.r.t.~$\Cout$ is 3,036,719 tuples. In contrast, directly optimizing the largest join size via $\Cmax$ only results in an intermediate size of 1,760,645 tuples; this reduces the largest intermediate size by 1.72x. The same can be observed in the recently introduced CEB benchmark: There is a query,\footnote{Specifically, \texttt{11a/5ec72a84a33f3b3b1f4e53b734731ab0bbecebba.sql}} whose optimal $\Cout$ plan has the same behavior. Namely, the largest intermediate join is consists of 11,637,593 tuples, yet if we directly optimized under $\Cmax$, we obtain a largest intermediate join of 9,805,312 tuples.

\subsection{Capping $\Cout$}

Having optimized for $\Cmax$ does not represent any impediment in further refining the plan w.r.t.~$\Cout$. Indeed, we propose a novel cost function to be optimized for, namely the $\Ccap$, motivated by the previous findings. Namely, we propose to \emph{jointly} optimize $\Cout$ and $\Cmax$, i.e., minimize the sum of the intermediate join sizes while enforcing that the largest one is equal to the optimal $\Cmax$ value. This ensures that we both have a bounded intermediate size (space-optimality) \emph{and} the best time-optimal plan under this constraint.

The drawback is naturally that this joint optimization now needs two optimizer passes: (i) Find the optimal $\Cmax$ value, and (ii) optimize $\Cout$ so that all intermediate join sizes are bounded above by that value. To reduce the optimization time of the second pass, we can reduce the search space of the optimization problem, by observing that in $\texttt{DPccp}$ (and $\texttt{DPsub}$) we can directly prune the intermediate solutions the size of which exceed the optimal $\Cmax$ value.

We visualize this preliminary overhead in Fig.~\ref{fig:ccap_overhead}. Note that reducing this overhead is the motivation behind our novel framework, which achieves strongly-polynomial speed-up over standard join ordering algorithm, $\DPccp$~\cite{dpccp}. In Fig.~\ref{fig:ccap_overhead}, we show the price we have to pay for this joint optimization. We optimize the queries of the JOB~\cite{job_first_paper} and CEB~\cite{ceb} benchmarks, respectively, via \texttt{DPccp} as follows: For $\Cout$, this is the classic scenario. For $\Ccap$, we first optimize $\Cmax$ via $\texttt{DPccp}$ and then run $\texttt{DPccp}$ again, optimizing $\Cout$ under the constraint that any intermediate join size is l.e.q.~the previously computed $\Cmax$ value. In the case of JOB, for the largest join queries of 17 relations, the overhead is of ~10ms. This is still negligible, but as we will show in Sec.~\ref{subsec:eval_ccap}, for larger join clique queries the overhead tends to be over 22\%.

\begin{figure}
    \hspace{-1.25em}
    % \centering
    \includegraphics[width=0.49\textwidth]{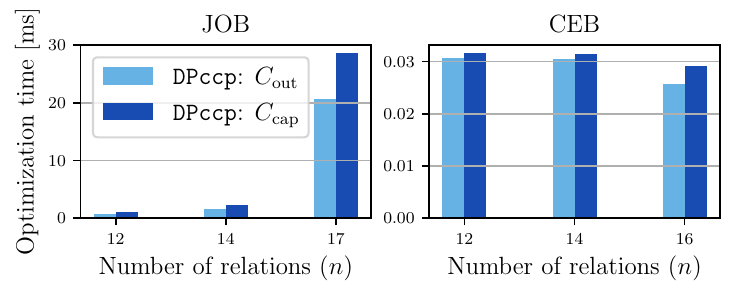}
    \caption{Overhead in optimization time for $\Ccap$ on JOB~\cite{job_first_paper} and CEB~\cite{ceb}, i.e., optimizing $\Cout$ under the constraint that the largest intermediate size is the same as when optimizing with $\Cmax$ (two optimization phases).}
    \label{fig:ccap_overhead}
\end{figure}

\subsection{Reducing Optimization Time}

The optimization of $\Ccap$ has in itself, first, the optimization of $\Cmax$, and then a \emph{pruned} $\Cout$ optimization. If using the standard exact join ordering algorithm, $\texttt{DPccp}$, the running time of the $\Cmax$ optimization is still $O(3^n)$. As discussed in Sec.~\ref{sec:simple_dpconv_cmax}, we can reduce this running time to $O(2^n n^3)$. Therefore, we can simply use $\texttt{DPconv[max]}$ and reduce the optimization time. The second pass, that of optimizing $\Cout$ under the constraint that the largest intermediate size does not exceed this value, remains as before. The advantage is that, since both the first pass is sped up via $\texttt{DPconv[max]}$ and the second pass has a pruned search space, we show that we are even faster than a ``vanilla'' $\Cout$ optimization. We show the corresponding experiments in Sec.~\ref{subsec:eval_ccap}.
\section{Evaluation}\label{sec:evaluation}

We show by means of experiments that $\texttt{DPconv}$ achieves a significant speedup over the standard $O(3^n)$-time join ordering algorithm.

\sparagraph{Experimental Setup.} We perform our experiments on a \texttt{c5.xlarge} EC2 instance which has an Intel Xeon Platinum 8275CL processor with 4 vCPUs and 8 GB of memory. All join ordering algorithms are implemented in C++.

\sparagraph{Benchmark Sets.} We use the setup from the CEB benchmark~\cite{ceb}, which already provides the true cardinalities for IMDb for their 13,644 queries and the 113 queries of JOB~\cite{job_first_paper} (note that this setup has already been used for Fig.~\ref{fig:ccap_overhead}). For clique queries, we generate random join cardinalities $\leq$ 100M, with the constraint that $c(S) \leq c(S_1)c(S_2), \forall S_1, S_2, \subsetneq S, S_1 \cap S_2 = \varnothing, S_1 \cup S_2 = S$, i.e., we do not exceed the cardinality of the cross-product of any possible combination of subset pairs. \mrrd{Note that since we directly optimize on clique queries, the running times can also be considered as that of optimizing for cross-products, as discussed in Sec.~\ref{subsec:framework_jo_meets_sc}.} \mrrb{Moreover, since subset convolution does not (yet) exploit sparse set functions---in our case, corresponding to unconnected query subgraphs---the running time is thus independent of the cyclicity of the query graph; we provide a discussion of this in Sec.~\ref{sec:discussion}.}

\exps{
Whenever we compare to the $\texttt{A}^\texttt{*}$-based algorithm by Haffner and Dittrich~\cite{jo_as_sp}, we use their benchmark set. We use the same evaluation scripts,\footnote{Their reproducibility experiment is available at: \url{https://gitlab.cs.uni-saarland.de/bigdata/mutable/evaluation}} i.e., we use their generated cliques and cardinalities. We show the optimization times for cliques of up to 18 relations in Fig.~\ref{fig:dpconv_vs_astar}.\footnote{In the current version, the evaluation script starts to timeout after 19 relations; we increased the default timeout to 800s, yet this did not solve the issue.}
}

\sparagraph{Competitors.} The standard exact algorithms, $\DPccp$ and $\DPsub$, follow the implementation in the reproducibility experiment of Neumann and Radke~\cite{adaptive}.\footnote{https://db.in.tum.de/$\sim$radke/papers/hugejoins-reproducibility.pdf} We also implement the bitsets as 64-bit integers, which we wrap with helper functions to provide iterators of subsets. $\DPconv$ uses all the optimizations described in Sec.~\ref{sec:layer_dp} for layered dynamic programs. The optimization time includes the time for extracting the join tree from the layered dynamic programming.

\subsection{Super-Polynomial Speedup}

\begin{figure}[!t]
    \centering
    \includegraphics[width=0.45\textwidth]{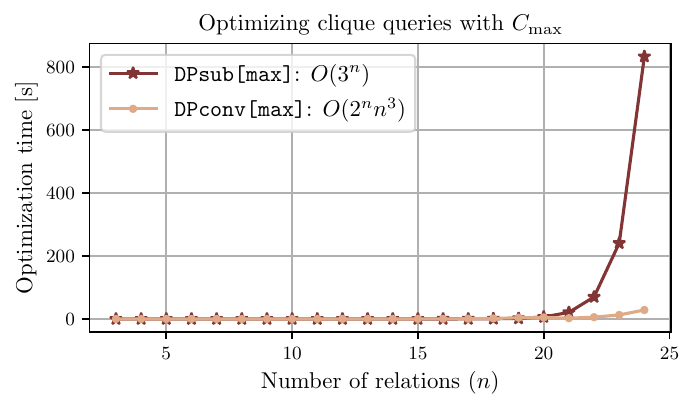}
    \caption{Clique queries optimization: Both \texttt{DPsub[max]} and \texttt{DPconv[max]} optimize for $\Cmax$.}
    \label{fig:cmax_clique}
\end{figure}

Within our framework, $\DPconv$, we have shown that join ordering can be done faster than $O(3^n)$. Specifically, we provided an $O(2^n n^2 Wn \log Wn)$-time algorithm for optimizing $\Cout$, which is $\widetilde O(2^n)$ when the largest join cardinality $W$ is polynomial in $n$, and an $O(2^n n^3)$-time algorithm for optimizing $\Cmax$; this is the first super-polynomial speedup for the join ordering problem. While the algorithm for $\Cout$ is not a practical one, we devised in Sec.~\ref{sec:simple_dpconv_cmax} a simple and practical algorithm for $\Cmax$.

\sparagraph{\DPconv{} vs. \DPsub{}}. We benchmark on clique queries, as these are the hardest queries to optimize for~\cite{simplification}. In particular, $\DPsub$ excels at this type of queries since $\DPccp$ has the overhead of exploring the graph itself (note that this is also the case in the experiments of the original paper~\cite{dpccp}). We show the optimization times for cliques of up to 24 relations in Fig.~\ref{fig:cmax_clique}. The optimization time is averaged for each $n \in \{3, \ldots, 24\}$ across 5 randomly generated instances.

The first observation is that our new algorithm is indeed practical: It starts being faster than $\DPsub$ after 17 relations, and for a large join query of 24 relations, it has a speedup of 29x. Note that $n = 17$ is still in the regime of the JOB benchmark. However, JOB has sparse query graphs, hence $\DPccp$ is enough for such queries, as already shown in Fig.~\ref{fig:ccap_overhead}.

\exps{
\sparagraph{\DPconv{} vs. $\texttt{A}^\texttt{*}$.} Haffner and Dittrich~\cite{jo_as_sp}, as part of their mu\emph{t}able database system, have recently shown that, indeed, using an $A^*$-based optimizer, one can reduce the number of $\texttt{ccp}$ to be explored. Note that this indeed leads to an optimal solution. In particular, unlike $\texttt{DPconv[max]}$, the practical instantiation of \DPconv{} for $\Cmax$, their algorithm optimizes for $\Cout$. Thus, the following experiment solely serves to compare the running times, as the cost functions to be optimized are different.
}
\begin{figure}
    \centering
    \includegraphics[width=0.95\linewidth]{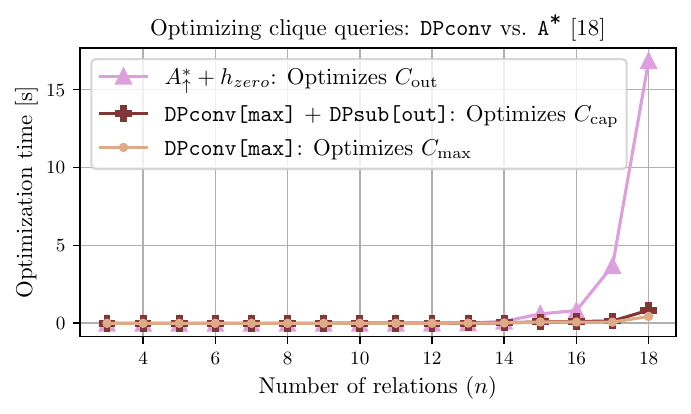}
    \caption{\expsno{Clique queries optimization (setup as in Ref.~\cite{jo_as_sp}): The $\texttt{A}^\texttt{*}$-based optimizer by Haffner and Dittrich~\cite{jo_as_sp} optimizes for $\Cout$, while \texttt{DPconv[max]} optimizes $\Cmax$ in $O(2^n n^3)$-time, and the joint combination between \texttt{DPconv[max]} and the pruned \texttt{DPsub[out]} optimizes for $\Ccap$ (Sec.~\ref{sec:ccap}).}}
    \label{fig:dpconv_vs_astar}
\end{figure}

\subsection{Optimizing $\Ccap$}\label{subsec:eval_ccap}

\begin{figure}
    \centering
    \includegraphics[width=0.48\textwidth]{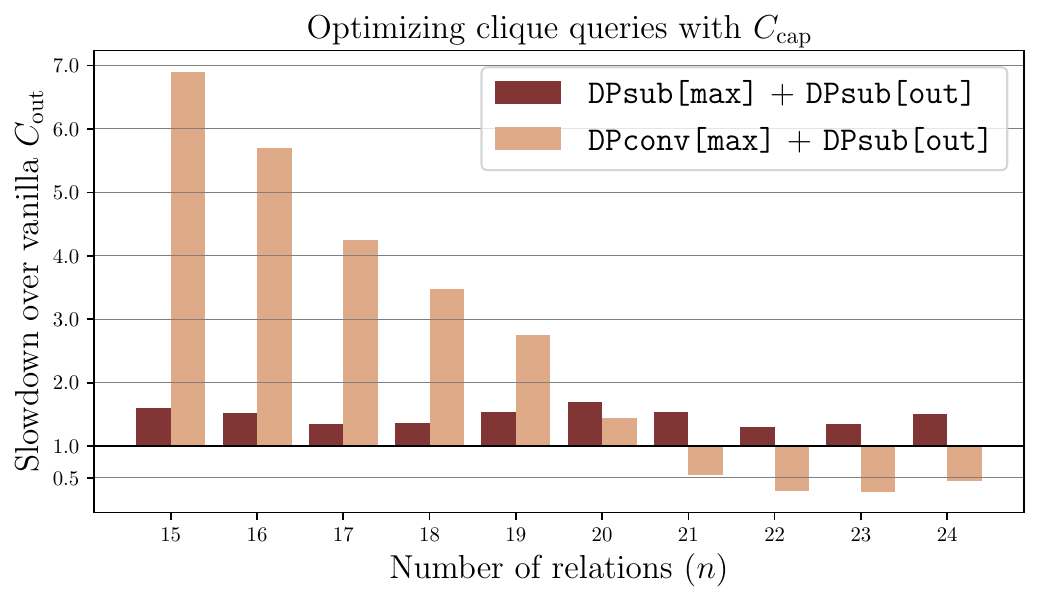}
    \caption{The slowdown of optimizing $\Ccap$ for large clique queries over a ``vanilla'' $\Cout$ optimization. \other{The baseline is $\texttt{DPsub[out]}$.} While a naive optimization is (naturally) slower, using $\texttt{DPconv[max]}$, the instantiation of our novel framework for $\Cmax$, in the first optimization pass, and followed by the pruned $\Cout$ optimization, we obtain an optimization time even faster than that of a ``vanilla'' $\Cout$.}
    \label{fig:ccap_clique_optimizaiton}
\end{figure}

We show that the optimization time of $\Ccap$ can be made practical using our novel $\DPconv$ framework. Recall that optimizing $\Ccap$ requires two optimization passes. We will focus on the first pass, in which we optimize for $\Cmax$. The reason is that we can use $\texttt{DPconv[max]}$, the instantiation of our novel framework $\DPconv$ for the $\Cmax$ cost function. This reduces the running time of this pass from $O(3^n)$-time to $O(2^n n^3)$-time. This is particularly significant for large join queries.

To this end, in Fig.~\ref{fig:ccap_clique_optimizaiton}, we show the slowdown of optimizing $\Ccap$ compared to a ``vanilla'' $\Cout$ optimization. We benchmark on clique queries, as previously argued. To not clutter the plot, we only keep $\DPsub$ as the baseline for clique queries. Thus, the baseline is the optimization of $\Ccap$ via $\DPsub$, namely we first optimize $\Cmax$ and then run a pruned $\Cout$ optimization, i.e., we then skip the subsets whose intermediate size is larger than this latter value. Our proposed algorithm replaces $\texttt{DPsub[max]}$ with $\texttt{DPconv[max]}$ in the first pass. We first observe that, naturally, the naïve $\Ccap$ optimization is slower than the ``vanilla'' $\Cout$ optimization (slow-down is over 22\%). In contrast, having both a super-polynomial speedup for the first optimization due to $\DPconv$ and a pruned search space for the second pass, we are even faster than the ``vanilla'' $\Cout$ optimization after 21 relations. Compared to the $\texttt{A}^\texttt{*}$-based algorithm, the optimization of $\Ccap$ outperforms that of $\Cout$ after 14 relations as well.

\sparagraph{Analyzing $\Ccap$ on CEB.} Out of the 13,644 queries of the CEB~\cite{ceb} benchmark, there are 2,873 queries for which the largest intermediate size in the optimal $\Cout$ plan is 6.8\% larger than the optimal $\Cmax$ intermediate size. For these queries, $\Cmax$ looses 22.8\% in the optimal $\Cout$ value, while $\Ccap$ naturally reduces this to only 9.5\%.
\section{Related Work}\label{sec:related_work}

The literature on join ordering is extensive. This is partly because of the effect that a bad join order can have on the query performance and hence the natural desire to avoid such cases.
As a result, there are a few \emph{exact} algorithms, a small number of \emph{polynomial-time} algorithms for restrictive cases, several \emph{greedy} (non-optimal) algorithms, and a handful of optimizers based on general-purpose solvers.
Our work falls into the category of exact algorithms. In particular, no previous work has observed the link to subset convolution, neither did it achieve a running time as we propose.
We are the first to break the $O(3^n)$ \other{time-barrier} for the join ordering problem on generic query graphs (and bushy solutions).
We divide the related work into exact, approximation, and best-effort algorithms. The latter are either polynomial-time algorithms for special instances or greedy algorithms without any approximation guarantee.

\subsection{Exact Algorithms}

The history of the join ordering problem starts at Selinger, proposing an $O(4^n)$-time algorithm, commonly referred to as $\texttt{DPsize}$~\cite{selinger}.
To some extent, this algorithm \emph{does} subset convolution in the naive way, i.e., it iterates all subsets $T$ of a given set $S$ of relations, but does not do that in time $2^{|S|}$, but rather in time $2^{n}$.
Vance and Maier~\cite{vance_maier} observed this limitation and fixed it within the $\texttt{DPsub}$ algorithm, which takes time $O(3^n)$.
Since $O(3^n)$ seemed rather rigid, not being adaptive to the graph topology, Ono and Lohman~\cite{lohman_cross_products} analyzed the \emph{minimum} number of subplan pairs that have to be iterated in any dynamic program.
To this end, Moerkotte and Neumann~\cite{dpccp} designed $\texttt{DPccp}$, which emulates to the graph topology and obtains as time-bound exactly the number of connected complement pairs (\#\texttt{ccp}'s).
However, the running time $O(3^n)$ still persisted. In their recent work, Haffner and \other{Dittrich}~\cite{jo_as_sp} showed that using the $\texttt{A}^\texttt{*}$ algorithm, one can obtain an algorithm which still outputs the optimal plan without having to explore all \#\texttt{ccp}'s.
This is indeed a promising result, as it \other{shows} that the lower-bound of $\#\texttt{ccp}$ can in some cases be by-passed. However, in the worst case, the running time is still the unyielding $O(3^n)$.
In our work, we obtain for the first time an $\widetilde O(2^n W)$-time algorithm, completely breaking the $O(3^n)$ time-barrier when \other{$W$ is polynomial in $n$}. In the case of $\Cmax$, i.e., minimizing the maximum intermediate join cardinality, we obtain an $O(2^n n^3)$-time algorithm, which is also practical.

\sparagraph{Bottom-Up vs. Top-Down.} It is well known that dynamic programs have two implementations, \emph{bottom-up} and \emph{top-down}, each with its advantages and disadvantages.
One of the most compelling advantages of top-down enumeration is the possiblity of easily integrating cost-bounds so that the search space may be easily pruned~\cite{pruning_fender}.
Hence, Chaudhuri et al.~\cite{chaudhuri} explore the possiblity of implementing join ordering as a top-down procedure, only considering linear solutions.
Building on this work, DeHaan and Tompa~\cite{dehaan} extend the top-down method to bushy join trees, disallowing cross products.
Fender and Moerkotte~\cite{fender11, fender12} improve the running time of these algorithms and get rid of the connectedness check, i.e., only outputting the \texttt{ccp}'s.

\subsection{Approximation Algorithms}

Exact algorithms are rather expensive. To this end, Chatterji et al.~\cite{chatterji} analyzed whether there are instances that can be solved by approximation algorithms in polynomial time. Unless $\classP = \classNP$, the answer remains negative. Specifically, they showed that, for any $\delta > 0$, the problem of approximating the optimal cost $K$ within a factor of $2^{\Theta(\log ^{1-\delta}K)}$ is \classNP-hard (note that this is a poly-logarithmic approximation).

\subsection{Best-Effort Algorithms}

The \classNP-hardness of a fundamental problem is a bitter truth. Therefore, research has focused on finding polynomial-time algorithms for special instances or greedy algorithms for arbitrary query graphs.

\sparagraph{Polynomial-Time Algorithms.} Exponential-time algorithms fail to optimize larger queries in a reasonable time.
To this end, it is interesting to ask which instances admit \emph{polynomial-time} algorithms. The most notable one is the cubic-time algorithm for chain queries. Another class is that of tree queries, for which the \texttt{IKKBZ} algorithms returns the optimal left-deep join tree~\cite{ik,kbz}.
Neumann and Radke~\cite{adaptive} observed that one can use \texttt{IKKBZ} as a sub-routine: They \emph{linearize} the query graph via \texttt{IKKBZ} (since a left-deep solution is inherently a linear ordering of the underlying graph) and then run the cubic-time dynamic program on top to build a near-optimal solution. This strategy yields excellent costs for tree queries. 

\sparagraph{Greedy Algorithms.} Research has also focused on greedy algorithms which can at least \emph{avoid} \other{the} bad plans. The most representative is the Greedy Operator Ordering (\other{\texttt{GOO}})~\cite{goo} that chooses the cheapest sub-plan at each step. This runs in $O(n \log n)$-time, yet it does not come with any optimality guarantee on the output join order.
This gap between exponential-time exact algorithms and purely greedy ones has remained unexplored until Kossman and Stocker~\cite{idp} introduced Iterative Dynamic Programming \other{(\texttt{IDP})} \other{which refines the greedy join orders of large queries. The key insight is to iteratively run exact DP on join subtrees of size $k$.}

\sparagraph{General-Purpose Solvers.} Join ordering has also been approached by several general-purpose solvers, such as genetic algorithms~\cite{steinbrunn_jo}, mixed-integer linear programming~\cite{milp_trummer}, and simulated annealing~\cite{steinbrunn_jo}. Note that these works only approximate the optimal solution (\emph{without} any approximation guarantee).
The problem can also be optimized on quantum hardware via quantum annealing~\cite{quantum_jo_1, quantum_jo_2}.
However, this does not lower the \emph{classical} time-complexity of exact join ordering.
Motivated by the promise of workload-aware query optimization, research also has focused on \emph{learned} alternatives: Marcus and Papaemmanouil~\cite{marcus_jo} suggest using Reinforcement Learning and introduce an agent that outputs the join order and is penalized based on the corresponding join cost. \other{Motivated by the repetitiveness of the queries in cloud workloads~\cite{predicate_caching}, a further promising direction is query super-optimization~\cite{query_superoptimization}.}
\section{Discussion}\label{sec:discussion}

\sparagraph{Resource-Aware Query Optimization.} The trend nowadays is to execute queries in multi-tenant cloud machines. Recently, Viswanathan et al.~\cite{microsoft_container_query_optimization} made the case for \emph{resource-aware} query optimization. The $\Cmax$ cost function can serve as a proxy for the maximum memory consumption of a given query. Minimizing $\Cmax$ of concurrently running queries can help reduce memory spikes.

\sparagraph{Co-Optimizing $\Cout$ and $\Cmax$.} The optimization of $\Cout$ and $\Cmax$ can go beyond our proposed cost function $\Ccap$. With $\Ccap$, we first compute the optimal value of $\Cmax$ and then do a pruned $\Cout$ optimization. Instead of taking the \emph{optimal} $\Cmax$ value, capping $\Cout$ at the 90th percentile of the largest intermediate size allows for more flexibility. So one can effectively trade off between query runtime and memory consumption. This is particularly interesting in cloud scenarios.

The cloud data warehouse Amazon Redshift uses predicted query memory to make scheduling decisions~\cite{wlm}. Instead, one could follow a proactive approach in which a query’s runtime and memory consumption is co-optimized with query scheduling. For example, when there is a high (concurrent) memory load on the system, one would want to minimize the peak memory consumption of newly arriving queries, while when there is low memory load, one can afford a higher memory consumption. Likewise, if there are long-running queries with a low memory footprint in the system, one might want to produce a high memory but fast-running query.

\sparagraph{Practical Implementations.} While we break the $O(3^n)$ time-barrier in the theoretical sense and indeed also provide a practical implementation for $\Cmax$ running in $O(2^n n^3)$-time, it is interesting to further explore practical implementations for $\Cout$, both for the exact (Sec.~\ref{subsec:c_out}) and the approximation algorithm (Sec.~\ref{sec:approx}). In particular, the details of the framework by Bringmann et al.~\cite{approx_min_plus}, upon which the approximate min-sum subset convolution algorithm is based on, span several pages.

\sparagraph{Sparse Subset Convolution.} Subset convolution does not (yet) have a \emph{sparse} counterpart, as is the case for sequence convolution (we refer the reader to Jin and Xu~\cite{jin2024shaving} for the latest results on sparse sequence convolution). This would be particularly useful for sparse query graphs of the JOB~\cite{job_first_paper} and CEB~\cite{ceb} benchmarks. These queries do not benefit from the speedup obtained by $\DPconv$ due to the fact they \other{only} touch at most 17 relations. \other{An algorithmic advance in subset convolution for the sparse setting can be directly transferred to the join ordering problem.}
\section{Conclusion}\label{sec:conclusion}

\other{Join ordering, or finding the optimal order of the joins in a query, is an indispensable task in a database management system. The problem has its roots in the seminal work of Selinger~\cite{selinger}, culminating with the graph-theoretic exact algorithm by Moerkotte and Neumann~\cite{dpccp}. Despite recent research~\cite{jo_as_sp}, the worst-case running time still remains $O(3^n)$.}

In this work, we provided the first super-polynomial speedup over the standard dynamic programming solution. Our framework optimizes (i) $\Cout$ in $\widetilde O(2^n)$-time, when \other{the largest join cardinality $W$ is polynomial in $n$}, and (ii) $\Cmax$ in $O(2^n n^3)$-time. $\DPconv$ is based on subset convolution, a fundamental tool in parameterized algorithms~\cite{Cygan2015_chapter}, and uses the fact that join ordering is implicitly a dynamic programming recursion using subset convolution similar to other classic problems in the literature (see Bj\"orklund et al.~\cite{fsc}). The reduction to subset convolution also implies an $(1 + \varepsilon)$-approximation algorithm for optimizing $\Cout$ in $\widetilde O(2^{3n/2} / \sqrt \varepsilon)$-time.

Beyond the theoretical results, we have made $\DPconv$ practical for database systems. In particular, our algorithm for optimizing $\Cmax$ outperforms the standard exact algorithm for cliques \other{with 17 relations and more}. In addition, we showed that joint optimization of $\Cout$ and $\Cmax$ results in faster optimization times than a ``vanilla'' $\Cout$ after 21 relations, while only increasing $\Cout$ by 9.5\%.

We expect future work on sparse subset convolution to further speed up our framework for query graphs with few connected subgraphs.

\bibliographystyle{ACM-Reference-Format}
\bibliography{dpconv}

%%% -*-BibTeX-*-
%%% Do NOT edit. File created by BibTeX with style
%%% ACM-Reference-Format-Journals [18-Jan-2012].

\begin{thebibliography}{51}

%%% ====================================================================
%%% NOTE TO THE USER: you can override these defaults by providing
%%% customized versions of any of these macros before the \bibliography
%%% command.  Each of them MUST provide its own final punctuation,
%%% except for \shownote{}, \showDOI{}, and \showURL{}.  The latter two
%%% do not use final punctuation, in order to avoid confusing it with
%%% the Web address.
%%%
%%% To suppress output of a particular field, define its macro to expand
%%% to an empty string, or better, \unskip, like this:
%%%
%%% \newcommand{\showDOI}[1]{\unskip}   % LaTeX syntax
%%%
%%% \def \showDOI #1{\unskip}           % plain TeX syntax
%%%
%%% ====================================================================

\ifx \showCODEN    \undefined \def \showCODEN     #1{\unskip}     \fi
\ifx \showDOI      \undefined \def \showDOI       #1{#1}\fi
\ifx \showISBNx    \undefined \def \showISBNx     #1{\unskip}     \fi
\ifx \showISBNxiii \undefined \def \showISBNxiii  #1{\unskip}     \fi
\ifx \showISSN     \undefined \def \showISSN      #1{\unskip}     \fi
\ifx \showLCCN     \undefined \def \showLCCN      #1{\unskip}     \fi
\ifx \shownote     \undefined \def \shownote      #1{#1}          \fi
\ifx \showarticletitle \undefined \def \showarticletitle #1{#1}   \fi
\ifx \showURL      \undefined \def \showURL       {\relax}        \fi
% The following commands are used for tagged output and should be
% invisible to TeX
\providecommand\bibfield[2]{#2}
\providecommand\bibinfo[2]{#2}
\providecommand\natexlab[1]{#1}
\providecommand\showeprint[2][]{arXiv:#2}

\bibitem[Bellman et~al\mbox{.}(1957)]%
        {bellman1957dynamic}
\bibfield{author}{\bibinfo{person}{R. Bellman}, \bibinfo{person}{R.E. Bellman}, {and} \bibinfo{person}{Rand Corporation}.} \bibinfo{year}{1957}\natexlab{}.
\newblock \bibinfo{booktitle}{\emph{Dynamic Programming}}.
\newblock \bibinfo{publisher}{Princeton University Press}.
\newblock
\showLCCN{lc57005444}
\urldef\tempurl%
\url{https://books.google.ro/books?id=rZW4ugAACAAJ}
\showURL{%
\tempurl}


\bibitem[Bj{\"{o}}rklund et~al\mbox{.}(2007)]%
        {fsc}
\bibfield{author}{\bibinfo{person}{Andreas Bj{\"{o}}rklund}, \bibinfo{person}{Thore Husfeldt}, \bibinfo{person}{Petteri Kaski}, {and} \bibinfo{person}{Mikko Koivisto}.} \bibinfo{year}{2007}\natexlab{}.
\newblock \showarticletitle{Fourier meets m{\"{o}}bius: fast subset convolution}. In \bibinfo{booktitle}{\emph{Proceedings of the 39th Annual {ACM} Symposium on Theory of Computing, San Diego, California, USA, June 11-13, 2007}}, \bibfield{editor}{\bibinfo{person}{David~S. Johnson} {and} \bibinfo{person}{Uriel Feige}} (Eds.). \bibinfo{publisher}{{ACM}}, \bibinfo{pages}{67--74}.
\newblock
\urldef\tempurl%
\url{https://doi.org/10.1145/1250790.1250801}
\showDOI{\tempurl}


\bibitem[Bj{\"{o}}rklund et~al\mbox{.}(2009)]%
        {chromatic_number_fsc}
\bibfield{author}{\bibinfo{person}{Andreas Bj{\"{o}}rklund}, \bibinfo{person}{Thore Husfeldt}, {and} \bibinfo{person}{Mikko Koivisto}.} \bibinfo{year}{2009}\natexlab{}.
\newblock \showarticletitle{Set Partitioning via Inclusion-Exclusion}.
\newblock \bibinfo{journal}{\emph{{SIAM} J. Comput.}} \bibinfo{volume}{39}, \bibinfo{number}{2} (\bibinfo{year}{2009}), \bibinfo{pages}{546--563}.
\newblock
\urldef\tempurl%
\url{https://doi.org/10.1137/070683933}
\showDOI{\tempurl}


\bibitem[Bringmann et~al\mbox{.}(2019)]%
        {approx_min_plus}
\bibfield{author}{\bibinfo{person}{Karl Bringmann}, \bibinfo{person}{Marvin K{\"{u}}nnemann}, {and} \bibinfo{person}{Karol Wegrzycki}.} \bibinfo{year}{2019}\natexlab{}.
\newblock \showarticletitle{Approximating {APSP} without scaling: equivalence of approximate min-plus and exact min-max}. In \bibinfo{booktitle}{\emph{Proceedings of the 51st Annual {ACM} {SIGACT} Symposium on Theory of Computing, {STOC} 2019, Phoenix, AZ, USA, June 23-26, 2019}}, \bibfield{editor}{\bibinfo{person}{Moses Charikar} {and} \bibinfo{person}{Edith Cohen}} (Eds.). \bibinfo{publisher}{{ACM}}, \bibinfo{pages}{943--954}.
\newblock
\urldef\tempurl%
\url{https://doi.org/10.1145/3313276.3316373}
\showDOI{\tempurl}


\bibitem[Chatterji et~al\mbox{.}(2002)]%
        {chatterji}
\bibfield{author}{\bibinfo{person}{Sourav Chatterji}, \bibinfo{person}{Sai Surya~Kiran Evani}, \bibinfo{person}{Sumit Ganguly}, {and} \bibinfo{person}{Mahesh~Datt Yemmanuru}.} \bibinfo{year}{2002}\natexlab{}.
\newblock \showarticletitle{On the Complexity of Approximate Query Optimization}. In \bibinfo{booktitle}{\emph{Proceedings of the Twenty-first {ACM} {SIGACT-SIGMOD-SIGART} Symposium on Principles of Database Systems, June 3-5, Madison, Wisconsin, {USA}}}, \bibfield{editor}{\bibinfo{person}{Lucian Popa}, \bibinfo{person}{Serge Abiteboul}, {and} \bibinfo{person}{Phokion~G. Kolaitis}} (Eds.). \bibinfo{publisher}{{ACM}}, \bibinfo{pages}{282--292}.
\newblock
\urldef\tempurl%
\url{https://doi.org/10.1145/543613.543650}
\showDOI{\tempurl}


\bibitem[Chaudhuri et~al\mbox{.}(1995)]%
        {chaudhuri}
\bibfield{author}{\bibinfo{person}{Surajit Chaudhuri}, \bibinfo{person}{Ravi Krishnamurthy}, \bibinfo{person}{Spyros Potamianos}, {and} \bibinfo{person}{Kyuseok Shim}.} \bibinfo{year}{1995}\natexlab{}.
\newblock \showarticletitle{Optimizing Queries with Materialized Views}. In \bibinfo{booktitle}{\emph{Proceedings of the Eleventh International Conference on Data Engineering, March 6-10, 1995, Taipei, Taiwan}}, \bibfield{editor}{\bibinfo{person}{Philip~S. Yu} {and} \bibinfo{person}{Arbee L.~P. Chen}} (Eds.). \bibinfo{publisher}{{IEEE} Computer Society}, \bibinfo{pages}{190--200}.
\newblock
\urldef\tempurl%
\url{https://doi.org/10.1109/ICDE.1995.380392}
\showDOI{\tempurl}


\bibitem[Codd(1970)]%
        {codd}
\bibfield{author}{\bibinfo{person}{E.~F. Codd}.} \bibinfo{year}{1970}\natexlab{}.
\newblock \showarticletitle{A Relational Model of Data for Large Shared Data Banks}.
\newblock \bibinfo{journal}{\emph{Commun. {ACM}}} \bibinfo{volume}{13}, \bibinfo{number}{6} (\bibinfo{year}{1970}), \bibinfo{pages}{377--387}.
\newblock
\urldef\tempurl%
\url{https://doi.org/10.1145/362384.362685}
\showDOI{\tempurl}


\bibitem[Cooley and Tukey(1965)]%
        {fft}
\bibfield{author}{\bibinfo{person}{James~W Cooley} {and} \bibinfo{person}{John~W Tukey}.} \bibinfo{year}{1965}\natexlab{}.
\newblock \showarticletitle{An algorithm for the machine calculation of complex Fourier series}.
\newblock \bibinfo{journal}{\emph{Mathematics of computation}} \bibinfo{volume}{19}, \bibinfo{number}{90} (\bibinfo{year}{1965}), \bibinfo{pages}{297--301}.
\newblock


\bibitem[Cygan et~al\mbox{.}(2015)]%
        {Cygan2015_chapter}
\bibfield{author}{\bibinfo{person}{Marek Cygan}, \bibinfo{person}{Fedor~V. Fomin}, \bibinfo{person}{{\L}ukasz Kowalik}, \bibinfo{person}{Daniel Lokshtanov}, \bibinfo{person}{D{\'a}niel Marx}, \bibinfo{person}{Marcin Pilipczuk}, \bibinfo{person}{Micha{\l} Pilipczuk}, {and} \bibinfo{person}{Saket Saurabh}.} \bibinfo{year}{2015}\natexlab{}.
\newblock \bibinfo{booktitle}{\emph{Algebraic techniques: sieves, convolutions, and polynomials}}.
\newblock \bibinfo{publisher}{Springer International Publishing}, \bibinfo{address}{Cham}, \bibinfo{pages}{321--355}.
\newblock
\showISBNx{978-3-319-21275-3}
\urldef\tempurl%
\url{https://doi.org/10.1007/978-3-319-21275-3_10}
\showDOI{\tempurl}


\bibitem[DeHaan and Tompa(2007)]%
        {dehaan}
\bibfield{author}{\bibinfo{person}{David DeHaan} {and} \bibinfo{person}{Frank~Wm. Tompa}.} \bibinfo{year}{2007}\natexlab{}.
\newblock \showarticletitle{Optimal top-down join enumeration}. In \bibinfo{booktitle}{\emph{Proceedings of the {ACM} {SIGMOD} International Conference on Management of Data, Beijing, China, June 12-14, 2007}}, \bibfield{editor}{\bibinfo{person}{Chee~Yong Chan}, \bibinfo{person}{Beng~Chin Ooi}, {and} \bibinfo{person}{Aoying Zhou}} (Eds.). \bibinfo{publisher}{{ACM}}, \bibinfo{pages}{785--796}.
\newblock
\urldef\tempurl%
\url{https://doi.org/10.1145/1247480.1247567}
\showDOI{\tempurl}


\bibitem[Dreyfus and Wagner(1971)]%
        {dreyfus1971steiner}
\bibfield{author}{\bibinfo{person}{Stuart~E. Dreyfus} {and} \bibinfo{person}{Robert~A. Wagner}.} \bibinfo{year}{1971}\natexlab{}.
\newblock \showarticletitle{The steiner problem in graphs}.
\newblock \bibinfo{journal}{\emph{Networks}} \bibinfo{volume}{1}, \bibinfo{number}{3} (\bibinfo{year}{1971}), \bibinfo{pages}{195--207}.
\newblock
\urldef\tempurl%
\url{https://doi.org/10.1002/NET.3230010302}
\showDOI{\tempurl}


\bibitem[Eich et~al\mbox{.}(2018)]%
        {eich_groupby}
\bibfield{author}{\bibinfo{person}{Marius Eich}, \bibinfo{person}{Pit Fender}, {and} \bibinfo{person}{Guido Moerkotte}.} \bibinfo{year}{2018}\natexlab{}.
\newblock \showarticletitle{Efficient generation of query plans containing group-by, join, and groupjoin}.
\newblock \bibinfo{journal}{\emph{{VLDB} J.}} \bibinfo{volume}{27}, \bibinfo{number}{5} (\bibinfo{year}{2018}), \bibinfo{pages}{617--641}.
\newblock
\urldef\tempurl%
\url{https://doi.org/10.1007/S00778-017-0476-3}
\showDOI{\tempurl}


\bibitem[Fegaras(1998)]%
        {goo}
\bibfield{author}{\bibinfo{person}{Leonidas Fegaras}.} \bibinfo{year}{1998}\natexlab{}.
\newblock \showarticletitle{A New Heuristic for Optimizing Large Queries}. In \bibinfo{booktitle}{\emph{Database and Expert Systems Applications, 9th International Conference, {DEXA} '98, Vienna, Austria, August 24-28, 1998, Proceedings}} \emph{(\bibinfo{series}{Lecture Notes in Computer Science}, Vol.~\bibinfo{volume}{1460})}, \bibfield{editor}{\bibinfo{person}{Gerald Quirchmayr}, \bibinfo{person}{Erich Schweighofer}, {and} \bibinfo{person}{Trevor J.~M. Bench{-}Capon}} (Eds.). \bibinfo{publisher}{Springer}, \bibinfo{pages}{726--735}.
\newblock
\urldef\tempurl%
\url{https://doi.org/10.1007/BFB0054528}
\showDOI{\tempurl}


\bibitem[Fender and Moerkotte(2011)]%
        {fender11}
\bibfield{author}{\bibinfo{person}{Pit Fender} {and} \bibinfo{person}{Guido Moerkotte}.} \bibinfo{year}{2011}\natexlab{}.
\newblock \showarticletitle{A new, highly efficient, and easy to implement top-down join enumeration algorithm}. In \bibinfo{booktitle}{\emph{Proceedings of the 27th International Conference on Data Engineering, {ICDE} 2011, April 11-16, 2011, Hannover, Germany}}, \bibfield{editor}{\bibinfo{person}{Serge Abiteboul}, \bibinfo{person}{Klemens B{\"{o}}hm}, \bibinfo{person}{Christoph Koch}, {and} \bibinfo{person}{Kian{-}Lee Tan}} (Eds.). \bibinfo{publisher}{{IEEE} Computer Society}, \bibinfo{pages}{864--875}.
\newblock
\urldef\tempurl%
\url{https://doi.org/10.1109/ICDE.2011.5767901}
\showDOI{\tempurl}


\bibitem[Fender and Moerkotte(2012)]%
        {fender12}
\bibfield{author}{\bibinfo{person}{Pit Fender} {and} \bibinfo{person}{Guido Moerkotte}.} \bibinfo{year}{2012}\natexlab{}.
\newblock \showarticletitle{Reassessing Top-Down Join Enumeration}.
\newblock \bibinfo{journal}{\emph{{IEEE} Trans. Knowl. Data Eng.}} \bibinfo{volume}{24}, \bibinfo{number}{10} (\bibinfo{year}{2012}), \bibinfo{pages}{1803--1818}.
\newblock
\urldef\tempurl%
\url{https://doi.org/10.1109/TKDE.2011.235}
\showDOI{\tempurl}


\bibitem[Fender et~al\mbox{.}(2012)]%
        {pruning_fender}
\bibfield{author}{\bibinfo{person}{Pit Fender}, \bibinfo{person}{Guido Moerkotte}, \bibinfo{person}{Thomas Neumann}, {and} \bibinfo{person}{Viktor Leis}.} \bibinfo{year}{2012}\natexlab{}.
\newblock \showarticletitle{Effective and Robust Pruning for Top-Down Join Enumeration Algorithms}. In \bibinfo{booktitle}{\emph{{IEEE} 28th International Conference on Data Engineering {(ICDE} 2012), Washington, DC, {USA} (Arlington, Virginia), 1-5 April, 2012}}, \bibfield{editor}{\bibinfo{person}{Anastasios Kementsietsidis} {and} \bibinfo{person}{Marcos Antonio~Vaz Salles}} (Eds.). \bibinfo{publisher}{{IEEE} Computer Society}, \bibinfo{pages}{414--425}.
\newblock
\urldef\tempurl%
\url{https://doi.org/10.1109/ICDE.2012.27}
\showDOI{\tempurl}


\bibitem[Fomin and Kratsch(2010)]%
        {fomin_exp_algos}
\bibfield{author}{\bibinfo{person}{Fedor~V. Fomin} {and} \bibinfo{person}{Dieter Kratsch}.} \bibinfo{year}{2010}\natexlab{}.
\newblock \bibinfo{booktitle}{\emph{Exact Exponential Algorithms} (\bibinfo{edition}{1st} ed.)}.
\newblock \bibinfo{publisher}{Springer-Verlag}, \bibinfo{address}{Berlin, Heidelberg}.
\newblock
\showISBNx{364216532X}


\bibitem[Haffner and Dittrich(2023)]%
        {jo_as_sp}
\bibfield{author}{\bibinfo{person}{Immanuel Haffner} {and} \bibinfo{person}{Jens Dittrich}.} \bibinfo{year}{2023}\natexlab{}.
\newblock \showarticletitle{Efficiently Computing Join Orders with Heuristic Search}.
\newblock \bibinfo{journal}{\emph{Proc. {ACM} Manag. Data}} \bibinfo{volume}{1}, \bibinfo{number}{1} (\bibinfo{year}{2023}), \bibinfo{pages}{73:1--73:26}.
\newblock
\urldef\tempurl%
\url{https://doi.org/10.1145/3588927}
\showDOI{\tempurl}


\bibitem[Ibaraki and Kameda(1984)]%
        {ik}
\bibfield{author}{\bibinfo{person}{Toshihide Ibaraki} {and} \bibinfo{person}{Tiko Kameda}.} \bibinfo{year}{1984}\natexlab{}.
\newblock \showarticletitle{On the Optimal Nesting Order for Computing N-Relational Joins}.
\newblock \bibinfo{journal}{\emph{{ACM} Trans. Database Syst.}} \bibinfo{volume}{9}, \bibinfo{number}{3} (\bibinfo{year}{1984}), \bibinfo{pages}{482--502}.
\newblock
\urldef\tempurl%
\url{https://doi.org/10.1145/1270.1498}
\showDOI{\tempurl}


\bibitem[Jin and Xu(2024)]%
        {jin2024shaving}
\bibfield{author}{\bibinfo{person}{Ce Jin} {and} \bibinfo{person}{Yinzhan Xu}.} \bibinfo{year}{2024}\natexlab{}.
\newblock \showarticletitle{Shaving Logs via Large Sieve Inequality: Faster Algorithms for Sparse Convolution and More}. In \bibinfo{booktitle}{\emph{Proceedings of the 56th Annual {ACM} Symposium on Theory of Computing, {STOC} 2024, Vancouver, BC, Canada, June 24-28, 2024}}, \bibfield{editor}{\bibinfo{person}{Bojan Mohar}, \bibinfo{person}{Igor Shinkar}, {and} \bibinfo{person}{Ryan O'Donnell}} (Eds.). \bibinfo{publisher}{{ACM}}, \bibinfo{pages}{1573--1584}.
\newblock
\urldef\tempurl%
\url{https://doi.org/10.1145/3618260.3649605}
\showDOI{\tempurl}


\bibitem[Kosaraju(1989)]%
        {kosaraju_min_max}
\bibfield{author}{\bibinfo{person}{S.R. Kosaraju}.} \bibinfo{year}{1989}\natexlab{}.
\newblock \showarticletitle{Efficient tree pattern matching}. In \bibinfo{booktitle}{\emph{30th Annual Symposium on Foundations of Computer Science}}. \bibinfo{pages}{178--183}.
\newblock
\urldef\tempurl%
\url{https://doi.org/10.1109/SFCS.1989.63475}
\showDOI{\tempurl}


\bibitem[Kossmann and Stocker(2000)]%
        {idp}
\bibfield{author}{\bibinfo{person}{Donald Kossmann} {and} \bibinfo{person}{Konrad Stocker}.} \bibinfo{year}{2000}\natexlab{}.
\newblock \showarticletitle{Iterative dynamic programming: a new class of query optimization algorithms}.
\newblock \bibinfo{journal}{\emph{{ACM} Trans. Database Syst.}} \bibinfo{volume}{25}, \bibinfo{number}{1} (\bibinfo{year}{2000}), \bibinfo{pages}{43--82}.
\newblock
\urldef\tempurl%
\url{https://doi.org/10.1145/352958.352982}
\showDOI{\tempurl}


\bibitem[Krishnamurthy et~al\mbox{.}(1986)]%
        {kbz}
\bibfield{author}{\bibinfo{person}{Ravi Krishnamurthy}, \bibinfo{person}{Haran Boral}, {and} \bibinfo{person}{Carlo Zaniolo}.} \bibinfo{year}{1986}\natexlab{}.
\newblock \showarticletitle{Optimization of Nonrecursive Queries}. In \bibinfo{booktitle}{\emph{VLDB'86 Twelfth International Conference on Very Large Data Bases, August 25-28, 1986, Kyoto, Japan, Proceedings}}, \bibfield{editor}{\bibinfo{person}{Wesley~W. Chu}, \bibinfo{person}{Georges Gardarin}, \bibinfo{person}{Setsuo Ohsuga}, {and} \bibinfo{person}{Yahiko Kambayashi}} (Eds.). \bibinfo{publisher}{Morgan Kaufmann}, \bibinfo{pages}{128--137}.
\newblock
\urldef\tempurl%
\url{http://www.vldb.org/conf/1986/P128.PDF}
\showURL{%
\tempurl}


\bibitem[Leis et~al\mbox{.}(2015)]%
        {job_first_paper}
\bibfield{author}{\bibinfo{person}{Viktor Leis}, \bibinfo{person}{Andrey Gubichev}, \bibinfo{person}{Atanas Mirchev}, \bibinfo{person}{Peter~A. Boncz}, \bibinfo{person}{Alfons Kemper}, {and} \bibinfo{person}{Thomas Neumann}.} \bibinfo{year}{2015}\natexlab{}.
\newblock \showarticletitle{How Good Are Query Optimizers, Really?}
\newblock \bibinfo{journal}{\emph{Proc. {VLDB} Endow.}} \bibinfo{volume}{9}, \bibinfo{number}{3} (\bibinfo{year}{2015}), \bibinfo{pages}{204--215}.
\newblock
\urldef\tempurl%
\url{https://doi.org/10.14778/2850583.2850594}
\showDOI{\tempurl}


\bibitem[Lohman(1988)]%
        {lohman1988grammar_cross_products}
\bibfield{author}{\bibinfo{person}{Guy~M. Lohman}.} \bibinfo{year}{1988}\natexlab{}.
\newblock \showarticletitle{Grammar-like Functional Rules for Representing Query Optimization Alternatives}. In \bibinfo{booktitle}{\emph{Proceedings of the 1988 {ACM} {SIGMOD} International Conference on Management of Data, Chicago, Illinois, USA, June 1-3, 1988}}, \bibfield{editor}{\bibinfo{person}{Haran Boral} {and} \bibinfo{person}{Per{-}{\AA}ke Larson}} (Eds.). \bibinfo{publisher}{{ACM} Press}, \bibinfo{pages}{18--27}.
\newblock
\urldef\tempurl%
\url{https://doi.org/10.1145/50202.50204}
\showDOI{\tempurl}


\bibitem[Marcus(2023)]%
        {query_superoptimization}
\bibfield{author}{\bibinfo{person}{Ryan Marcus}.} \bibinfo{year}{2023}\natexlab{}.
\newblock \showarticletitle{Learned Query Superoptimization}. In \bibinfo{booktitle}{\emph{Joint Proceedings of Workshops at the 49th International Conference on Very Large Data Bases {(VLDB} 2023), Vancouver, Canada, August 28 - September 1, 2023}} \emph{(\bibinfo{series}{{CEUR} Workshop Proceedings}, Vol.~\bibinfo{volume}{3462})}, \bibfield{editor}{\bibinfo{person}{Rajesh Bordawekar}, \bibinfo{person}{Cinzia Cappiello}, \bibinfo{person}{Vasilis Efthymiou}, \bibinfo{person}{Lisa Ehrlinger}, \bibinfo{person}{Vijay Gadepally}, \bibinfo{person}{Sainyam Galhotra}, \bibinfo{person}{Sandra Geisler}, \bibinfo{person}{Sven Groppe}, \bibinfo{person}{Le~Gruenwald}, \bibinfo{person}{Alon~Y. Halevy}, \bibinfo{person}{Hazar Harmouch}, \bibinfo{person}{Oktie Hassanzadeh}, \bibinfo{person}{Ihab~F. Ilyas}, \bibinfo{person}{Ernesto Jim{\'{e}}nez{-}Ruiz}, \bibinfo{person}{Sanjay Krishnan}, \bibinfo{person}{Tirthankar Lahiri}, \bibinfo{person}{Guoliang Li}, \bibinfo{person}{Jiaheng Lu}, \bibinfo{person}{Wolfgang Mauerer},
  \bibinfo{person}{Umar~Farooq Minhas}, \bibinfo{person}{Felix Naumann}, \bibinfo{person}{M.~Tamer {\"{O}}zsu}, \bibinfo{person}{El~Kindi Rezig}, \bibinfo{person}{Kavitha Srinivas}, \bibinfo{person}{Michael Stonebraker}, \bibinfo{person}{Satyanarayana~R. Valluri}, \bibinfo{person}{Maria{-}Esther Vidal}, \bibinfo{person}{Haixun Wang}, \bibinfo{person}{Jiannan Wang}, \bibinfo{person}{Yingjun Wu}, \bibinfo{person}{Xun Xue}, \bibinfo{person}{Mohamed Za{\"{\i}}t}, {and} \bibinfo{person}{Kai Zeng}} (Eds.). \bibinfo{publisher}{CEUR-WS.org}.
\newblock
\urldef\tempurl%
\url{https://ceur-ws.org/Vol-3462/AIDB5.pdf}
\showURL{%
\tempurl}


\bibitem[Marcus and Papaemmanouil(2018)]%
        {marcus_jo}
\bibfield{author}{\bibinfo{person}{Ryan Marcus} {and} \bibinfo{person}{Olga Papaemmanouil}.} \bibinfo{year}{2018}\natexlab{}.
\newblock \showarticletitle{Deep Reinforcement Learning for Join Order Enumeration}. In \bibinfo{booktitle}{\emph{Proceedings of the First International Workshop on Exploiting Artificial Intelligence Techniques for Data Management, aiDM@SIGMOD 2018, Houston, TX, USA, June 10, 2018}}, \bibfield{editor}{\bibinfo{person}{Rajesh Bordawekar} {and} \bibinfo{person}{Oded Shmueli}} (Eds.). \bibinfo{publisher}{{ACM}}, \bibinfo{pages}{3:1--3:4}.
\newblock
\urldef\tempurl%
\url{https://doi.org/10.1145/3211954.3211957}
\showDOI{\tempurl}


\bibitem[Moerkotte(2023)]%
        {Moerkotte2006BuildingQC}
\bibfield{author}{\bibinfo{person}{Guido Moerkotte}.} \bibinfo{year}{2023}\natexlab{}.
\newblock \bibinfo{booktitle}{\emph{Building Query Compilers (Draft / Under Construction)}}.
\newblock
\urldef\tempurl%
\url{https://pi3.informatik.uni-mannheim.de/\%7Emoer/querycompiler.pdf}
\showURL{%
\tempurl}


\bibitem[Moerkotte and Neumann(2006)]%
        {dpccp}
\bibfield{author}{\bibinfo{person}{Guido Moerkotte} {and} \bibinfo{person}{Thomas Neumann}.} \bibinfo{year}{2006}\natexlab{}.
\newblock \showarticletitle{Analysis of Two Existing and One New Dynamic Programming Algorithm for the Generation of Optimal Bushy Join Trees without Cross Products}. In \bibinfo{booktitle}{\emph{Proceedings of the 32nd International Conference on Very Large Data Bases, Seoul, Korea, September 12-15, 2006}}, \bibfield{editor}{\bibinfo{person}{Umeshwar Dayal}, \bibinfo{person}{Kyu{-}Young Whang}, \bibinfo{person}{David~B. Lomet}, \bibinfo{person}{Gustavo Alonso}, \bibinfo{person}{Guy~M. Lohman}, \bibinfo{person}{Martin~L. Kersten}, \bibinfo{person}{Sang~Kyun Cha}, {and} \bibinfo{person}{Young{-}Kuk Kim}} (Eds.). \bibinfo{publisher}{{ACM}}, \bibinfo{pages}{930--941}.
\newblock
\urldef\tempurl%
\url{http://dl.acm.org/citation.cfm?id=1164207}
\showURL{%
\tempurl}


\bibitem[Moerkotte and Neumann(2008)]%
        {dphyp}
\bibfield{author}{\bibinfo{person}{Guido Moerkotte} {and} \bibinfo{person}{Thomas Neumann}.} \bibinfo{year}{2008}\natexlab{}.
\newblock \showarticletitle{Dynamic programming strikes back}. In \bibinfo{booktitle}{\emph{Proceedings of the {ACM} {SIGMOD} International Conference on Management of Data, {SIGMOD} 2008, Vancouver, BC, Canada, June 10-12, 2008}}, \bibfield{editor}{\bibinfo{person}{Jason~Tsong{-}Li Wang}} (Ed.). \bibinfo{publisher}{{ACM}}, \bibinfo{pages}{539--552}.
\newblock
\urldef\tempurl%
\url{https://doi.org/10.1145/1376616.1376672}
\showDOI{\tempurl}


\bibitem[Negi et~al\mbox{.}(2021)]%
        {ceb}
\bibfield{author}{\bibinfo{person}{Parimarjan Negi}, \bibinfo{person}{Ryan Marcus}, \bibinfo{person}{Andreas Kipf}, \bibinfo{person}{Hongzi Mao}, \bibinfo{person}{Nesime Tatbul}, \bibinfo{person}{Tim Kraska}, {and} \bibinfo{person}{Mohammad Alizadeh}.} \bibinfo{year}{2021}\natexlab{}.
\newblock \showarticletitle{Flow-Loss: Learning Cardinality Estimates That Matter}.
\newblock \bibinfo{journal}{\emph{Proc. {VLDB} Endow.}} \bibinfo{volume}{14}, \bibinfo{number}{11} (\bibinfo{year}{2021}), \bibinfo{pages}{2019--2032}.
\newblock
\urldef\tempurl%
\url{https://doi.org/10.14778/3476249.3476259}
\showDOI{\tempurl}


\bibitem[Neumann(2009)]%
        {simplification}
\bibfield{author}{\bibinfo{person}{Thomas Neumann}.} \bibinfo{year}{2009}\natexlab{}.
\newblock \showarticletitle{Query simplification: graceful degradation for join-order optimization}. In \bibinfo{booktitle}{\emph{Proceedings of the {ACM} {SIGMOD} International Conference on Management of Data, {SIGMOD} 2009, Providence, Rhode Island, USA, June 29 - July 2, 2009}}, \bibfield{editor}{\bibinfo{person}{Ugur {\c{C}}etintemel}, \bibinfo{person}{Stanley~B. Zdonik}, \bibinfo{person}{Donald Kossmann}, {and} \bibinfo{person}{Nesime Tatbul}} (Eds.). \bibinfo{publisher}{{ACM}}, \bibinfo{pages}{403--414}.
\newblock
\urldef\tempurl%
\url{https://doi.org/10.1145/1559845.1559889}
\showDOI{\tempurl}


\bibitem[Neumann and Radke(2018)]%
        {adaptive}
\bibfield{author}{\bibinfo{person}{Thomas Neumann} {and} \bibinfo{person}{Bernhard Radke}.} \bibinfo{year}{2018}\natexlab{}.
\newblock \showarticletitle{Adaptive Optimization of Very Large Join Queries}. In \bibinfo{booktitle}{\emph{Proceedings of the 2018 International Conference on Management of Data, {SIGMOD} Conference 2018, Houston, TX, USA, June 10-15, 2018}}, \bibfield{editor}{\bibinfo{person}{Gautam Das}, \bibinfo{person}{Christopher~M. Jermaine}, {and} \bibinfo{person}{Philip~A. Bernstein}} (Eds.). \bibinfo{publisher}{{ACM}}, \bibinfo{pages}{677--692}.
\newblock
\urldef\tempurl%
\url{https://doi.org/10.1145/3183713.3183733}
\showDOI{\tempurl}


\bibitem[Ngo et~al\mbox{.}(2012)]%
        {wcoj}
\bibfield{author}{\bibinfo{person}{Hung~Q. Ngo}, \bibinfo{person}{Ely Porat}, \bibinfo{person}{Christopher R{\'{e}}}, {and} \bibinfo{person}{Atri Rudra}.} \bibinfo{year}{2012}\natexlab{}.
\newblock \showarticletitle{Worst-case optimal join algorithms: [extended abstract]}. In \bibinfo{booktitle}{\emph{Proceedings of the 31st {ACM} {SIGMOD-SIGACT-SIGART} Symposium on Principles of Database Systems, {PODS} 2012, Scottsdale, AZ, USA, May 20-24, 2012}}, \bibfield{editor}{\bibinfo{person}{Michael Benedikt}, \bibinfo{person}{Markus Kr{\"{o}}tzsch}, {and} \bibinfo{person}{Maurizio Lenzerini}} (Eds.). \bibinfo{publisher}{{ACM}}, \bibinfo{pages}{37--48}.
\newblock
\urldef\tempurl%
\url{https://doi.org/10.1145/2213556.2213565}
\showDOI{\tempurl}


\bibitem[Ono and Lohman(1990)]%
        {lohman_cross_products}
\bibfield{author}{\bibinfo{person}{Kiyoshi Ono} {and} \bibinfo{person}{Guy~M. Lohman}.} \bibinfo{year}{1990}\natexlab{}.
\newblock \showarticletitle{Measuring the Complexity of Join Enumeration in Query Optimization}. In \bibinfo{booktitle}{\emph{16th International Conference on Very Large Data Bases, August 13-16, 1990, Brisbane, Queensland, Australia, Proceedings}}, \bibfield{editor}{\bibinfo{person}{Dennis McLeod}, \bibinfo{person}{Ron Sacks{-}Davis}, {and} \bibinfo{person}{Hans{-}J{\"{o}}rg Schek}} (Eds.). \bibinfo{publisher}{Morgan Kaufmann}, \bibinfo{pages}{314--325}.
\newblock
\urldef\tempurl%
\url{http://www.vldb.org/conf/1990/P314.PDF}
\showURL{%
\tempurl}


\bibitem[Ponta et~al\mbox{.}(2008)]%
        {ponta2008speeding}
\bibfield{author}{\bibinfo{person}{Oriana Ponta}, \bibinfo{person}{Falk H{\"{u}}ffner}, {and} \bibinfo{person}{Rolf Niedermeier}.} \bibinfo{year}{2008}\natexlab{}.
\newblock \showarticletitle{Speeding up Dynamic Programming for Some NP-Hard Graph Recoloring Problems}. In \bibinfo{booktitle}{\emph{Theory and Applications of Models of Computation, 5th International Conference, {TAMC} 2008, Xi'an, China, April 25-29, 2008. Proceedings}} \emph{(\bibinfo{series}{Lecture Notes in Computer Science}, Vol.~\bibinfo{volume}{4978})}, \bibfield{editor}{\bibinfo{person}{Manindra Agrawal}, \bibinfo{person}{Ding{-}Zhu Du}, \bibinfo{person}{Zhenhua Duan}, {and} \bibinfo{person}{Angsheng Li}} (Eds.). \bibinfo{publisher}{Springer}, \bibinfo{pages}{490--501}.
\newblock
\urldef\tempurl%
\url{https://doi.org/10.1007/978-3-540-79228-4\_43}
\showDOI{\tempurl}


\bibitem[Radke and Neumann(2019)]%
        {radke}
\bibfield{author}{\bibinfo{person}{Bernhard Radke} {and} \bibinfo{person}{Thomas Neumann}.} \bibinfo{year}{2019}\natexlab{}.
\newblock \showarticletitle{LinDP++: Generalizing Linearized {DP} to Crossproducts and Non-Inner Joins}. In \bibinfo{booktitle}{\emph{Datenbanksysteme f{\"{u}}r Business, Technologie und Web {(BTW} 2019), 18. Fachtagung des GI-Fachbereichs ,,Datenbanken und Informationssysteme" (DBIS), 4.-8. M{\"{a}}rz 2019, Rostock, Germany, Proceedings}} \emph{(\bibinfo{series}{{LNI}}, Vol.~\bibinfo{volume}{{P-289}})}, \bibfield{editor}{\bibinfo{person}{Torsten Grust}, \bibinfo{person}{Felix Naumann}, \bibinfo{person}{Alexander B{\"{o}}hm}, \bibinfo{person}{Wolfgang Lehner}, \bibinfo{person}{Theo H{\"{a}}rder}, \bibinfo{person}{Erhard Rahm}, \bibinfo{person}{Andreas Heuer}, \bibinfo{person}{Meike Klettke}, {and} \bibinfo{person}{Holger Meyer}} (Eds.). \bibinfo{publisher}{Gesellschaft f{\"{u}}r Informatik, Bonn}, \bibinfo{pages}{57--76}.
\newblock
\urldef\tempurl%
\url{https://doi.org/10.18420/BTW2019-05}
\showDOI{\tempurl}


\bibitem[Rehfeldt and Koch(2022)]%
        {rehfeldt2022exact}
\bibfield{author}{\bibinfo{person}{Daniel Rehfeldt} {and} \bibinfo{person}{Thorsten Koch}.} \bibinfo{year}{2022}\natexlab{}.
\newblock \showarticletitle{On the Exact Solution of Prize-Collecting Steiner Tree Problems}.
\newblock \bibinfo{journal}{\emph{{INFORMS} J. Comput.}} \bibinfo{volume}{34}, \bibinfo{number}{2} (\bibinfo{year}{2022}), \bibinfo{pages}{872--889}.
\newblock
\urldef\tempurl%
\url{https://doi.org/10.1287/IJOC.2021.1087}
\showDOI{\tempurl}


\bibitem[Saxena et~al\mbox{.}(2023)]%
        {wlm}
\bibfield{author}{\bibinfo{person}{Gaurav Saxena}, \bibinfo{person}{Mohammad Rahman}, \bibinfo{person}{Naresh Chainani}, \bibinfo{person}{Chunbin Lin}, \bibinfo{person}{George Caragea}, \bibinfo{person}{Fahim Chowdhury}, \bibinfo{person}{Ryan Marcus}, \bibinfo{person}{Tim Kraska}, \bibinfo{person}{Ippokratis Pandis}, {and} \bibinfo{person}{Balakrishnan~(Murali) Narayanaswamy}.} \bibinfo{year}{2023}\natexlab{}.
\newblock \showarticletitle{Auto-WLM: Machine Learning Enhanced Workload Management in Amazon Redshift}. In \bibinfo{booktitle}{\emph{Companion of the 2023 International Conference on Management of Data, {SIGMOD/PODS} 2023, Seattle, WA, USA, June 18-23, 2023}}, \bibfield{editor}{\bibinfo{person}{Sudipto Das}, \bibinfo{person}{Ippokratis Pandis}, \bibinfo{person}{K.~Sel{\c{c}}uk Candan}, {and} \bibinfo{person}{Sihem Amer{-}Yahia}} (Eds.). \bibinfo{publisher}{{ACM}}, \bibinfo{pages}{225--237}.
\newblock
\urldef\tempurl%
\url{https://doi.org/10.1145/3555041.3589677}
\showDOI{\tempurl}


\bibitem[Schmidt et~al\mbox{.}(2024)]%
        {predicate_caching}
\bibfield{author}{\bibinfo{person}{Tobias Schmidt}, \bibinfo{person}{Andreas Kipf}, \bibinfo{person}{Dominik Horn}, \bibinfo{person}{Gaurav Saxena}, {and} \bibinfo{person}{Tim Kraska}.} \bibinfo{year}{2024}\natexlab{}.
\newblock \showarticletitle{Predicate Caching: Query-Driven Secondary Indexing for Cloud Data Warehouses}. In \bibinfo{booktitle}{\emph{Companion of the 2024 International Conference on Management of Data, {SIGMOD/PODS} 2024, Santiago AA, Chile, June 9-15, 2024}}, \bibfield{editor}{\bibinfo{person}{Pablo Barcel{\'{o}}}, \bibinfo{person}{Nayat S{\'{a}}nchez{-}Pi}, \bibinfo{person}{Alexandra Meliou}, {and} \bibinfo{person}{S.~Sudarshan}} (Eds.). \bibinfo{publisher}{{ACM}}, \bibinfo{pages}{347--359}.
\newblock
\urldef\tempurl%
\url{https://doi.org/10.1145/3626246.3653395}
\showDOI{\tempurl}


\bibitem[Sch{\"{o}}nberger et~al\mbox{.}(2023)]%
        {quantum_jo_2}
\bibfield{author}{\bibinfo{person}{Manuel Sch{\"{o}}nberger}, \bibinfo{person}{Stefanie Scherzinger}, {and} \bibinfo{person}{Wolfgang Mauerer}.} \bibinfo{year}{2023}\natexlab{}.
\newblock \showarticletitle{Ready to Leap (by Co-Design)? Join Order Optimisation on Quantum Hardware}.
\newblock \bibinfo{journal}{\emph{Proc. {ACM} Manag. Data}} \bibinfo{volume}{1}, \bibinfo{number}{1} (\bibinfo{year}{2023}), \bibinfo{pages}{92:1--92:27}.
\newblock
\urldef\tempurl%
\url{https://doi.org/10.1145/3588946}
\showDOI{\tempurl}


\bibitem[Scott et~al\mbox{.}(2005)]%
        {protein_network}
\bibfield{author}{\bibinfo{person}{Jacob Scott}, \bibinfo{person}{Trey Ideker}, \bibinfo{person}{Richard~M. Karp}, {and} \bibinfo{person}{Roded Sharan}.} \bibinfo{year}{2005}\natexlab{}.
\newblock \showarticletitle{Efficient Algorithms for Detecting Signaling Pathways in Protein Interaction Networks}. In \bibinfo{booktitle}{\emph{Research in Computational Molecular Biology, 9th Annual International Conference, {RECOMB} 2005, Cambridge, MA, USA, May 14-18, 2005, Proceedings}} \emph{(\bibinfo{series}{Lecture Notes in Computer Science}, Vol.~\bibinfo{volume}{3500})}, \bibfield{editor}{\bibinfo{person}{Satoru Miyano}, \bibinfo{person}{Jill~P. Mesirov}, \bibinfo{person}{Simon Kasif}, \bibinfo{person}{Sorin Istrail}, \bibinfo{person}{Pavel~A. Pevzner}, {and} \bibinfo{person}{Michael~S. Waterman}} (Eds.). \bibinfo{publisher}{Springer}, \bibinfo{pages}{1--13}.
\newblock
\urldef\tempurl%
\url{https://doi.org/10.1007/11415770\_1}
\showDOI{\tempurl}


\bibitem[Selinger et~al\mbox{.}(1979)]%
        {selinger}
\bibfield{author}{\bibinfo{person}{Patricia~G. Selinger}, \bibinfo{person}{Morton~M. Astrahan}, \bibinfo{person}{Donald~D. Chamberlin}, \bibinfo{person}{Raymond~A. Lorie}, {and} \bibinfo{person}{Thomas~G. Price}.} \bibinfo{year}{1979}\natexlab{}.
\newblock \showarticletitle{Access Path Selection in a Relational Database Management System}. In \bibinfo{booktitle}{\emph{Proceedings of the 1979 {ACM} {SIGMOD} International Conference on Management of Data, Boston, Massachusetts, USA, May 30 - June 1}}, \bibfield{editor}{\bibinfo{person}{Philip~A. Bernstein}} (Ed.). \bibinfo{publisher}{{ACM}}, \bibinfo{pages}{23--34}.
\newblock
\urldef\tempurl%
\url{https://doi.org/10.1145/582095.582099}
\showDOI{\tempurl}


\bibitem[Steinbrunn et~al\mbox{.}(1997)]%
        {steinbrunn_jo}
\bibfield{author}{\bibinfo{person}{Michael Steinbrunn}, \bibinfo{person}{Guido Moerkotte}, {and} \bibinfo{person}{Alfons Kemper}.} \bibinfo{year}{1997}\natexlab{}.
\newblock \showarticletitle{Heuristic and Randomized Optimization for the Join Ordering Problem}.
\newblock \bibinfo{journal}{\emph{{VLDB} J.}} \bibinfo{volume}{6}, \bibinfo{number}{3} (\bibinfo{year}{1997}), \bibinfo{pages}{191--208}.
\newblock
\urldef\tempurl%
\url{https://doi.org/10.1007/S007780050040}
\showDOI{\tempurl}


\bibitem[Stoian(2024)]%
        {stoian_approx}
\bibfield{author}{\bibinfo{person}{Mihail Stoian}.} \bibinfo{year}{2024}\natexlab{}.
\newblock \bibinfo{title}{Sinking an Algorithmic Isthmus: (1 + {epsilon})-Approximate Min-Sum Subset Convolution}.
\newblock
\newblock
\showeprint[arxiv]{2404.11364}~[cs.DS]


\bibitem[Trummer and Koch(2017)]%
        {milp_trummer}
\bibfield{author}{\bibinfo{person}{Immanuel Trummer} {and} \bibinfo{person}{Christoph Koch}.} \bibinfo{year}{2017}\natexlab{}.
\newblock \showarticletitle{Solving the Join Ordering Problem via Mixed Integer Linear Programming}. In \bibinfo{booktitle}{\emph{Proceedings of the 2017 {ACM} International Conference on Management of Data, {SIGMOD} Conference 2017, Chicago, IL, USA, May 14-19, 2017}}, \bibfield{editor}{\bibinfo{person}{Semih Salihoglu}, \bibinfo{person}{Wenchao Zhou}, \bibinfo{person}{Rada Chirkova}, \bibinfo{person}{Jun Yang}, {and} \bibinfo{person}{Dan Suciu}} (Eds.). \bibinfo{publisher}{{ACM}}, \bibinfo{pages}{1025--1040}.
\newblock
\urldef\tempurl%
\url{https://doi.org/10.1145/3035918.3064039}
\showDOI{\tempurl}


\bibitem[Vance(1998)]%
        {vance_phdthesis}
\bibfield{author}{\bibinfo{person}{Bennet Vance}.} \bibinfo{year}{1998}\natexlab{}.
\newblock \emph{\bibinfo{title}{Join-order Optimization with Cartesian Products}}.
\newblock \bibinfo{thesistype}{Ph.\,D. Dissertation}. \bibinfo{school}{Oregon Graduate Institute of Science and Technology}.
\newblock


\bibitem[Vance and Maier(1996)]%
        {vance_maier}
\bibfield{author}{\bibinfo{person}{Bennet Vance} {and} \bibinfo{person}{David Maier}.} \bibinfo{year}{1996}\natexlab{}.
\newblock \showarticletitle{Rapid Bushy Join-order Optimization with Cartesian Products}. In \bibinfo{booktitle}{\emph{Proceedings of the 1996 {ACM} {SIGMOD} International Conference on Management of Data, Montreal, Quebec, Canada, June 4-6, 1996}}, \bibfield{editor}{\bibinfo{person}{H.~V. Jagadish} {and} \bibinfo{person}{Inderpal~Singh Mumick}} (Eds.). \bibinfo{publisher}{{ACM} Press}, \bibinfo{pages}{35--46}.
\newblock
\urldef\tempurl%
\url{https://doi.org/10.1145/233269.233317}
\showDOI{\tempurl}


\bibitem[Viswanathan et~al\mbox{.}(2018)]%
        {microsoft_container_query_optimization}
\bibfield{author}{\bibinfo{person}{Lalitha Viswanathan}, \bibinfo{person}{Alekh Jindal}, {and} \bibinfo{person}{Konstantinos Karanasos}.} \bibinfo{year}{2018}\natexlab{}.
\newblock \showarticletitle{Query and Resource Optimization: Bridging the Gap}. In \bibinfo{booktitle}{\emph{34th {IEEE} International Conference on Data Engineering, {ICDE} 2018, Paris, France, April 16-19, 2018}}. \bibinfo{publisher}{{IEEE} Computer Society}, \bibinfo{pages}{1384--1387}.
\newblock
\urldef\tempurl%
\url{https://doi.org/10.1109/ICDE.2018.00156}
\showDOI{\tempurl}


\bibitem[Winker et~al\mbox{.}(2023)]%
        {quantum_jo_1}
\bibfield{author}{\bibinfo{person}{Tobias Winker}, \bibinfo{person}{Umut {\c{C}}alikyilmaz}, \bibinfo{person}{Le Gruenwald}, {and} \bibinfo{person}{Sven Groppe}.} \bibinfo{year}{2023}\natexlab{}.
\newblock \showarticletitle{Quantum Machine Learning for Join Order Optimization using Variational Quantum Circuits}. In \bibinfo{booktitle}{\emph{Proceedings of the International Workshop on Big Data in Emergent Distributed Environments, BiDEDE 2023, Seattle, WA, USA, 18 June 2023}}, \bibfield{editor}{\bibinfo{person}{Sven Groppe}, \bibinfo{person}{Le~Gruenwald}, {and} \bibinfo{person}{Ching{-}Hsien Hsu}} (Eds.). \bibinfo{publisher}{{ACM}}, \bibinfo{pages}{5:1--5:7}.
\newblock
\urldef\tempurl%
\url{https://doi.org/10.1145/3579142.3594299}
\showDOI{\tempurl}


\bibitem[Yates(1937)]%
        {frankyates1937}
\bibfield{author}{\bibinfo{person}{Frank Yates}.} \bibinfo{year}{1937}\natexlab{}.
\newblock \showarticletitle{The Design and Analysis of Factorial Experiments}.
\newblock \bibinfo{journal}{\emph{Imperial Bureau of Soil Science}} (\bibinfo{year}{1937}).
\newblock


\end{thebibliography}

\end{document}
\endinput
%%
%% End of file `sample-authordraft.tex'.